\documentclass[a4paper,UKenglish,cleveref, autoref, thm-restate]{lipics-v2021}

\hideLIPIcs  

\title{The Compilability Thresholds of 2-CNF to OBDD} 

\titlerunning{The Compilability Thresholds of 2-CNF to OBDD} 

\author{Alexis de Colnet}{Leiden University, Leiden, Netherlands}{a.e.h.de.colnet@liacs.leidenuniv.nl}{https://orcid.org/0000-0002-7517-6735}{}
\author{Alfons Laarman}{Leiden University, Leiden, Netherlands}{a.w.laarman@liacs.leidenuniv.nl}{https://orcid.org/0000-0002-2433-4174}{}
\author{Joon Hyung Lee}{Leiden University, Leiden, Netherlands}{j.h.lee@liacs.leidenuniv.nl}{https://orcid.org/0000-0001-8628-9922}{}

\authorrunning{A. de Colnet, A. Laarman, J. H. Lee} 
\Copyright{Alexis de Colnet, Alfons Laarman, Joon Hyung Lee} 

\ccsdesc[500]{Theory of computation~Logic}
\ccsdesc[500]{Theory of computation~Randomness, geometry and discrete structures}

\keywords{Knowledge Compilation, OBDD, Random CNF, Phase Transition} 

\category{} 

\relatedversion{} 

\funding{de Colnet and Laarman are supported by the Netherlands Organization for Scientific Research (NWO/OCW) as
part of \emph{Quantum Limits} (project number SUMMIT.1.1016). Lee is supported by the Dutch Research Council (NWO) as part of the project \emph{Boosting the Search for New Quantum Algorithms with AI (BoostQA)}, file number NGF.1623.23.033, under the research programme \emph{Quantum Technologie 2023}.}

\acknowledgements{}

\nolinenumbers 

\EventEditors{Alexey Ignatiev and Stefan Szeider}
\EventNoEds{2}
\EventLongTitle{29th International Conference on Theory and Applications of Satisfiability Testing (SAT 2026)}
\EventShortTitle{SAT 2026}
\EventAcronym{SAT}
\EventYear{2026}
\EventDate{July 20--23, 2026}
\EventLocation{Lisbon, Portugal}
\EventLogo{}
\SeriesVolume{377}
\ArticleNo{13}

\newcommand\defmath[2]{\newcommand#1{\ensuremath{#2}\xspace}}

\defmath{\calA}{\mathcal{A}}
\defmath{\calE}{\mathcal{E}}
\defmath{\calF}{\mathcal{F}}
\defmath{\calG}{\mathcal{G}}
\defmath{\calGr}{\mathcal{G}}
\defmath{\calH}{\mathcal{H}}
\defmath{\calC}{\mathcal{C}}
\defmath{\scrP}{\mathscr{P}}

\defmath{\var}{var}
\defmath{\tw}{tw}
\defmath{\pw}{pw}
\defmath{\mw}{mw}
\defmath{\gates}{gates}
\defmath{\mmw}{mmw}
\defmath{\mimw}{mimw}
\defmath{\leaves}{leaves}
\defmath{\obddSize}{\text{OBDD-size}}
\defmath{\MF}{\mathit{MF}}
\defmath{\sat}{sat}
\defmath{\CNF}{\mathit{CNF}}
\defmath{\Ex}{\mathbb{E}}

\usepackage[linesnumbered,lined,ruled,vlined]{algorithm2e}
\usepackage{tikz}
\usepackage{mathtools}
\usepackage{mathrsfs}
\usepackage{complexity}
\usepackage{xspace}

\usepackage{bbm}
\usepackage{url}

\begin{document}

\maketitle

\begin{abstract}
We prove the existence of two thresholds regarding the compilability of random 2-CNF formulas to OBDDs. The formulas are drawn from $\calF_2(n,\delta n)$, the uniform distribution over all 2-CNFs with $\delta n$ clauses and $n$ variables, with $\delta \geq 0$ a constant. We show that, with high probability, the random 2-CNF admits OBDDs of size polynomial in $n$ if $0 \leq \delta  < 1/2$ or if $\delta > 1$. On the other hand, for $1/2 < \delta  < 1$, with high probability, the random $2$-CNF admits only OBDDs of size exponential in $n$.
It is no coincidence that the two ``compilability thresholds'' are $\delta = 1/2$ and $\delta = 1$. Both are known thresholds for other CNF properties, namely, $\delta = 1$ is the satisfiability threshold for 2-CNF while $\delta = 1/2$ is the treewidth threshold, i.e., the point where the treewidth of the primal graph jumps from constant to linear in $n$ with high probability.
\end{abstract}

\section{Introduction}

BDDs (Binary Decision Diagrams) are a very well-known model for representing Boolean functions. This model is quite natural and has various properties that make it attractive and therefore has been considerably studied for decades. Here we focus on \emph{ordered} BDDs, or OBDDs, which form an important fragment of BDDs thanks to many desirable properties for practice (canonicity, fast \emph{apply}, etc.). Because of their practical appeal, research on the limits of OBDD developed fast, centered around the following question: which functions admit a small OBDD representation? We will talk of \emph{OBDD-size}, i.e., the size of a minimal-size OBDD for the function. In \emph{knowledge compilation} terms, we want to know which functions are (potentially) easy \emph{to compile} into the OBDD language~\cite{DarwicheM02}, in the sense that they admit small representations in this language. Unfortunately, many families of functions have exponential OBDD-size in the number $n$ of variables~\cite{Zak84,BabaiHST87,Wegener88,Gal97,BolligW98}. Through combinatorial arguments similar to those already used by Shannon for switching circuits~\cite{Shannon49}, one can even show that, as $n$ increases, almost all Boolean functions have exponential OBDD-size~\cite[Theorem 2.2.2.]{Wegener00}. But not all functions are encountered in the wild and we want to focus on the OBDD-size of simple functions, in particular functions that can be represented with compact formulas, since these are not uncommon in practice. 
Several classes of small DNF or CNF formulas were shown to have exponential OBDD-size, including formulas as simple as monotone 2-CNFs~\cite{BolligW98,Razgon21}. For specific classes of CNFs, the structure of the formula, generally captured through  parameters of its underlying graphs (primal graph, incidence graph, hypergraph, etc.) explains large OBDD-size; this is for instance the case for monotone CNFs~\cite{amarilli2020connecting,Razgon21} and Tseitin formulas~\cite{ItsyksonRS22,deColnetM23}.  But these results based on the structure do not directly extend to general CNFs. 
Our aim with this paper is to further our understanding of what makes a CNF formula hard to compile into OBDD using \emph{random CNF formula} models.

\subparagraph*{Contributions.} We focus on random 2-CNFs drawn from $\calF_2(n,m)$, that is,  formulas are taken uniformly at random from the set of 2-CNFs that have $m$ clauses over variables $x_1,\dots,x_n$. Our 2-CNFs are \emph{sparse}: we fix a constant $\delta > 0$ and consider the OBDD-size of formulas drawn from  $\calF_2(n,\delta n)$ as $n$ increases. We show the existence of two \emph{compilability thresholds}: $\delta = 1/2$ and $\delta = 1$. While $\delta < 1/2$, the random CNF almost always has polynomial OBDD size. If $\delta > 1$, again the OBDD-size is almost always small (constant actually). On the other hand,
for $\delta$ between $1/2$ and $1$, almost all random 2-CNF have exponential size. This paper proves Theorem~\ref{theorem:compilability_thresholds}. The proofs of intermediate lemmas and corollaries marked with $(\star)$ appear in appendix.

\begin{restatable}[Compilability Thresholds of 2-CNF to OBDD]{theorem}{compilabilityThreshold}\label{theorem:compilability_thresholds}
Let $\delta \geq 0$, $n \in \mathbb{N}$ and $F$ be a random formula following $\calF_2(n,\delta n)$. There are constants $c,d > 0$ such that, 
\begin{enumerate}[(I)]
\item if $\delta < 1/2$, then $\lim_{n \rightarrow \infty} \Pr[\obddSize(F) \leq n^c] = 1$ \label{item:I}
\item if $1/2 < \delta < 1$, then $\lim_{n \rightarrow \infty} \Pr[\obddSize(F) \geq \exp(n^d)] = 1$
\label{item:II}
\item if $\delta > 1$, then $\lim_{n \rightarrow \infty} \Pr[\obddSize(F) = 1] = 1$
\label{item:III}
\end{enumerate}
\end{restatable}
\subparagraph*{Related work.}
Our work builds on a rich body of results on random 2-CNF formulas. 
The satisfiability threshold (Theorem \ref{theorem:sat_threshold_for_2CNF}) was established independently 
by Chvátal and Reed~\cite{ChvatalR92} and Goerdt~\cite{Goerdt96}. 
The precise number of satisfying assignments of random 2-SAT formulas 
was determined by Achlioptas et al.~\cite{2satsol}. The treewidth threshold we rely 
on is due to Lee, Lee, and Oum~\cite{lee2012rank}. On the knowledge 
compilation side,
phase transitions for compilation have been studied experimentally~\cite{GuptaRM20}. However, surprisingly, a theoretical investigation is still lacking and we have to turn to research on OBDD-based proof systems to find some work that considers OBDDs for random CNF, though not in a setting relevant for us~\cite{FriedmanX13}. 
Our paper provides the first theoretical account of the phase transition 
behavior in OBDD compilation, establishing exact thresholds for 
random 2-CNF.

\section{Known Threshold Phenomena for CNF and Graphs}

We review concepts and threshold phenomena related to (random) CNF formulas and (random) graphs. A formula $F$ in conjunctive normal form, or CNF, is a conjunction of \emph{clauses}, i.e., terms of the form $\bigvee_{i} \ell_i$ where each $\ell_i$ is a \emph{literal}, i.e., a Boolean variable or its negation. Clauses are assumed not to repeat literals, nor to contain opposite literals. The size of $F$, denoted by $|F|$, is its number of clauses. The set of variables appearing in $F$ is denoted by $\var(F)$. An assignment $\alpha$ to a set of Boolean variables $X$ is a mapping from $X$ to $\{0,1\}$. The set of assignments to $X$ is denoted by $\{0,1\}^X$. We denote by $\sat(F)$ the set of assignments to $\var(F)$ that satisfy $F$.
$\SAT$ is the set of satisfiable CNF formulas ($\sat(F) \neq \emptyset$). For $k$ a positive integer, a $k$-CNF is a CNF whose clauses all contain exactly $k$ literals. 

\subsection{Random CNF} A random variable $Y$ following a probability distribution $\mathcal{D}$ is denoted by $Y \sim \mathcal{D}$. For $n,k$ positive integers we let $X_n = \{x_1,\dots,x_n\}$ be the set of all variables and $Cl_{k,n}$ be the set of all $2^k\binom{n}{k}$ possible clauses of size $k$ over $X_n$. Let $0 \leq m \leq |Cl_{k,n}|$, then $\calF_k(n,m)$ is a uniform distribution over all $k$-CNF made of $m$ distinct clauses in $Cl_{k,n}$. Following the established convention~\cite{donald2015art,Achlioptas21}, we study random $2$-CNF drawn from $\calF_2(n,\delta n)$ for $\delta$ a constant. It is well-known that $\delta = 1$ is the satisfiability threshold for random 2-CNF following $\calF_2(n,\delta n)$~\cite{ChvatalR92,Goerdt96}. That is, for a fixed $\delta < 1$, as $n$ increases almost all 2-CNF in $\calF_2(n,\delta n)$ are satisfiable. A contrario, for a fixed $\delta > 1$, as $n$ increases almost all 2-CNF in $\calF_2(n,\delta n)$ are unsatisfiable. This threshold is important for this paper, regardless of the fact that deciding satisfiability of a 2-CNF formula is in~\P.

\begin{theorem}[Satisfiability threshold for 2-CNF,~\cite{ChvatalR92,Goerdt96}]\label{theorem:sat_threshold_for_2CNF}
Let $\delta \geq 0$, $n \in \mathbb{N}$ and $F$ be a random formula following $\calF_2(n,\delta n)$. If $\delta < 1$ then $\lim_{n \rightarrow \infty} \Pr[F \text{ is satisfiable}] = 1$. If $\delta > 1$ then $\lim_{n \rightarrow \infty} \Pr[F \text{ is satisfiable}] = 0$.
\end{theorem}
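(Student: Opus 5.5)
The approach is the classical one via the \emph{implication digraph} $D_F$. Each clause $(\ell_1 \vee \ell_2)$ contributes the arcs $\neg\ell_1 \to \ell_2$ and $\neg\ell_2 \to \ell_1$ on the $2n$ literals. Recall the standard fact that a 2-CNF $F$ is unsatisfiable iff some variable $x$ has directed paths $x \leadsto \neg x$ and $\neg x \leadsto x$ in $D_F$, i.e.\ $x$ and $\neg x$ lie in one strongly connected component. Both directions of Theorem~\ref{theorem:sat_threshold_for_2CNF} then become statements about the existence of certain path structures in a random sparse graph. Each specific clause of $Cl_{2,n}$ appears in $F \sim \calF_2(n,\delta n)$ with probability $\delta n/|Cl_{2,n}| \approx \delta/(2n)$. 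A fixed set of $r = o(\sqrt n)$ distinct clauses all appear with probability $(1+o(1))(\delta/(2n))^r$, and at most $(\delta/(2n))^r$ in general, by negative correlation of sampling without replacement.

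\textbf{Case $\delta<1$ (first moment).} If $F$ is unsatisfiable, then taking a shortest contradictory cycle through $x$ and $\neg x$ and trimming it yields a \emph{bicycle}. This is a sequence of literals $w_1,\dots,w_s$ on distinct variables, together with literals $u,v\in\{w_1,\neg w_1,\dots,w_s,\neg w_s\}$, such that all clauses $(\neg u\vee w_1),(\neg w_1\vee w_2),\dots,(\neg w_{s-1}\vee w_s),(\neg w_s\vee v)$ are in $F$. The first step is to check this combinatorial claim carefully, since it is where the structure of shortest paths is used. The number of bicycles of length $s$ is at most $n^s 2^s (2s)^2$, and each requires $s+1$ fixed clauses. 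Hence
\[
\Ex[\#\text{bicycles}] \le \sum_{s\ge 1} n^s 2^s 4s^2 \Bigl(\frac{\delta}{2n}\Bigr)^{s+1} = \frac{2}{n}\sum_{s\ge1} s^2\delta^{s+1} = O(1/n),
\]
because $\delta<1$ makes the series converge. Markov's inequality then gives $\Pr[F\notin\SAT]\to 0$.

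\textbf{Case $\delta>1$ (second moment).} Here I would follow Chvátal and Reed and count \emph{snakes}. A snake is a sequence of literals $w_1,\dots,w_{2t-1}$ on distinct variables (with $w_{t}$ playing the role of $x$), whose clauses form the implication chain $w_t \leadsto \neg w_t \leadsto w_t$. Concretely, $x\to w_1\to\dots\to \neg x$ and back, with the second half using the complements of the first half's literals in reverse order so that a single chain of about $2t$ clauses suffices. The number $X_t$ of snakes has $\Ex[X_t] \approx \Theta(\delta^{2t}/n)$, which tends to infinity once $t \ge C\log n$ since $\delta>1$. I would take $t$ a small power of $n$, say $t=\lfloor n^{1/5}\rfloor$, so that the $(1+o(1))$ approximation of clause probabilities is valid for pairs of snakes. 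It then remains to show $\Ex[X_t^2]\le(1+o(1))\Ex[X_t]^2$ and conclude by Chebyshev's inequality. It may be cleaner to first work in the independent model, where each clause is present with probability $p=\delta/(2n)$, and then transfer to $\calF_2(n,\delta n)$ by monotonicity of unsatisfiability in $m$ and concentration of the number of clauses. For this transfer one uses a slightly smaller $\delta'\in(1,\delta)$.

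\textbf{Main obstacle.} The main difficulty is the second moment computation. One must bound the contribution of pairs of snakes sharing $j\ge1$ clauses. This is done by classifying overlaps by the number of maximal shared segments $r$. A pair with overlap structure $(j,r)$ costs a factor of roughly $n^{r}\cdot(\text{poly}(t))^{r}/(\delta^{j})$ relative to independent pairs, while gaining $n^{-j}$... more precisely, the shared variables save $n^{j-r}$ in the count but also save $j$ clause probabilities. The ratio summed over $(j,r)$ must be shown to be $o(1)$ except for $j=0$. Choosing $t$ polynomially small in $n$ is what makes the $\text{poly}(t)^r$ terms harmless against the $n^{-1}$ gain per shared segment. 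I expect this bookkeeping, rather than the first-moment part, to take most of the effort.
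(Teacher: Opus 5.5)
The paper gives no proof of this theorem; it imports it from Chv\'atal--Reed and Goerdt. Your plan reconstructs the Chv\'atal--Reed argument: bicycles plus a first moment for $\delta<1$, snakes plus Chebyshev for $\delta>1$. That is the right route, but two steps fail as you have written them.

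\emph{Bicycles.} Your definition allows $u=w_s$ and $v=w_1$. In that case the first and last clauses are both $(\neg w_s\vee w_1)$, and the object is merely a directed cycle $w_1\to\cdots\to w_s\to w_1$ made of $s$ clauses. Such cycles contribute about $\sum_s(2n)^s(\delta/2n)^s=\sum_s\delta^s=\Theta(1)$ to the expectation, not $O(1/n)$. For instance, an equivalence $x\leftrightarrow y$ (two clauses) is present with probability bounded away from $0$. So $\Ex[\#\text{bicycles}]=O(1/n)$ is false for your definition.

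The fix is to require $v\in\{\neg w_1,\dots,\neg w_s\}$. Then the $s+1$ clauses are distinct: the middle ones are distinct edges of a path, and each end clause differs from its neighbour in the sign of $w_1$, resp.\ $w_s$. The count only improves. The extraction delivers this form:
\begin{itemize}
\item Take a shortest path $x=y_0\to\cdots\to y_p=\neg x$ and its first repeated variable, $y_b=\neg y_a$.
\item Since $\neg y_a$ reaches $\neg x$, hence $x$, hence $y_a$, you may replace $x$ by $y_a$ and assume $y_0,\dots,y_{p-1}$ lie on distinct variables.
\item Walk backwards along a path from $\neg x$ to $x$, prepending literals while their variables are new. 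The first literal whose variable is already present is $u$, and $v=\neg x$.
\end{itemize}

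\emph{Snakes and the second moment.} Your ``concretely'' sentence contradicts distinctness. If the return half reused the complemented literals, a snake would have only about $t$ free variables, and $\Ex[X_t]\approx(2n)^t(\delta/2n)^{2t}\to0$. The correct object is the single cyclic chain $\neg w_t\to w_1\to\cdots\to w_t\to\cdots\to w_{2t-1}\to\neg w_t$ on $2t-1$ distinct variables, with $2t$ clauses.

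In the overlap count, a maximal shared run of $\ell$ clauses pins $\ell+1$ literals of the second snake. So the saving is $(2n)^{j+r}$, not $n^{j-r}$. A configuration then contributes about $\delta^{-j}(O(t^2)/n)^r$ to $\Ex[X_t^2]/\Ex[X_t]^2$. Summing the run lengths against $\delta^{-j}$ costs only a constant per run, because $\delta>1$.

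The case you do not mention is the pivot $w_t$, and it is the one that actually dictates $t$. Its variable occupies two positions of the chain and lies in four clauses. When both positions fall in shared runs, only $(2n)^{j+r-1}$ is saved, which costs an extra factor $n$:
\begin{itemize}
\item For $r=1$ this forces $j\ge t$, and the contribution is absorbed by $O(t^2)\delta^{-t}\to0$.
\item For $r\ge2$ it requires $n(t^2/n)^2=t^4/n\to0$.
\end{itemize}
That constraint, rather than the $(1+o(1))$ approximation of clause probabilities, is why $t=\lfloor n^{1/5}\rfloor$ works. Runs may also switch branches at the first snake's pivot, but that is only an $O(1)$ factor per run. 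With these repairs your outline becomes the full Chv\'atal--Reed proof.
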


\noindent The behavior at the threshold itself ($\delta = 1$) is ignored (as in~\cite{ChvatalR92,Goerdt96,Achlioptas21}).

\subsection{Random Graphs and Graph Parameters} 
The \emph{primal graph} of CNF $F$ is the graph whose vertices are the variables of $F$ and where two variables are connected by an edge if and only if they appear together in a clause, i.e., the edge set is $\{ \{x,y\} \mid \exists \text{ clause } C \in F, x \in \var(C) \text{ and } y \in \var(C) \text{ and } x \neq y\}$. Note that it is not a multigraph and that, for 2-CNF, each edge can correspond to at most $4$ clauses. 

\emph{Treewidth} and \emph{pathwidth} are well-known graph parameters~\cite{Bodlaender98,SamerS21}. Their definition is not necessary for this paper. We simply recall that the treewidth of a graph $G$ is an integer between $0$ and $|V(G)| - 1$ denoted by $\tw(G)$, that measures how close $G$ is to a forest. Similarly, the pathwidth of $G$ is an integer between $0$ and $|V(G)| - 1$ denoted by $\pw(G)$, that measures how close $G$ is to a disjoint union of paths. It is known that $\mathrm{tw}(G) \le \pw(G) \le O(\tw(G)\log|V(G)|)$~\cite[Corollary 24]{Bodlaender98}.	We call \emph{primal treewidth} and \emph{primal pathwidth} of $F$ the treewidth and pathwidth of its primal graph. For convenience, given a random CNF $F$ drawn from $\calF_2(n,m)$, we denote by $G_F$ the primal graph of $F$ where the missing variables, i.e., those in $X_n \setminus \var(F)$, are added as isolated vertices. Adding isolated vertices to a non-empty graph does not modify its treewidth nor its pathwidth.
	
Given $p \in [0,1]$ and $n \in \mathbb{N}$, $\calGr(n,p)$ is the probability distribution for the random graph over $n$ vertices $V_n = \{v_1,\dots,v_n\}$ where each one of the $\binom{n}{2}$ possible edges is independently present with probability $p$. The expected number of edges in $\calGr(n,p)$ is $p \binom{n}{2}$. Given $m \in \mathbb{N}$, $\calGr(n,m)$ is the probability distribution for the random graph over $n$ vertices composed of $m$ edges chosen uniformly at random and without replacement from all $\binom{n}{2}$ possible edges.\footnote{Using the same letter $\calG$ for $\calG(n,m)$ and $\calG(n,p)$ is standard. The nature of the second parameter (integer or in $[0,1]$) marks the distinction.} For $\delta$ a fixed constant, $\mathcal{G}(n, p = 2\delta/n)$ and 
$\mathcal{G}(n, m = \delta n)$ have the same asymptotic behavior with respect to monotone graph properties as $n \to \infty$. We will refer 
to~\cite[Corollary 1.16]{JansonLR00} to switch from one distribution to the 
other.

A threshold phenomenon occurs at $\delta = 1/2$ for $\tw(G)$ when $G \sim \calGr(n,p = 2\delta/n)$ or $G \sim \calGr(n,m = \delta n)$. When $\delta < 1/2$, a result of Erd\"os and R\'enyi says that, as $n$ increases, every connected component of $G$ has at most one cycle with high probability~\cite{erd6s1960evolution}, and therefore $G$ has treewidth $2$ at most. On the other hand, when $\delta > 1/2$, the treewidth jumps to $\Omega(n)$~\cite{lee2012rank}.

\begin{theorem}[Treewidth threshold for graphs \protect{\cite[Corollary 1.2]{lee2012rank}}]\label{theorem:tw_threshold_for_graphs}
Let $\delta \ge 0$ a fixed constant, $n \in \mathbb{N}$ and $G \sim \calGr(n,p = 2\delta/n)$. If $\delta < 1/2$ then $\lim_{n \rightarrow \infty} \Pr[\tw(G) \geq 3] = 0$. Furthermore, there is a constant $c > 0$ such that, if $\delta > 1/2$ then $\lim_{n \rightarrow \infty} \Pr[\tw(G) \leq cn] = 0$. 
\end{theorem}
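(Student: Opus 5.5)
The plan is to prove the two halves separately. The subcritical half is a first-moment computation. The supercritical half goes through the structure of the giant component together with the separator characterisation of treewidth. Throughout I work in $\calGr(n,p=2\delta/n)$ directly, since that is how the statement is phrased.

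\emph{Case $\delta<1/2$.} First I use that a graph in which every connected component has at most one cycle has treewidth at most $2$. Hence if $\tw(G)\geq 3$, some component contains two cycles. Take a minimal connected subgraph of that component containing two cycles. It is either a ``theta'' graph (two vertices joined by three internally disjoint paths) or a ``dumbbell'' (two cycles joined by a possibly trivial path). Such a subgraph $H$ on $k$ vertices has exactly $k+1$ edges. Its shape is fixed by $O(k^2)$ choices of branch points and path lengths, so the number of such labelled subgraphs on a fixed $k$-set is at most $k!\cdot O(k^2)$. The expected number of copies in $G$ is then bounded by
\[
\sum_{k\geq 4}\binom{n}{k}\,k!\,O(k^2)\left(\frac{2\delta}{n}\right)^{k+1}\;\leq\;\frac{O(1)}{n}\sum_{k\geq 4}k^2(2\delta)^k \;=\; O(1/n),
\]
because $2\delta<1$. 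By Markov's inequality $\Pr[\tw(G)\geq 3]\to 0$. This part is routine.

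\emph{Case $\delta>1/2$.} I use the standard lower bound via balanced separators. If there is a set $W$ of vertices such that every set $S$ separating $W$ in a balanced way has $|S|\geq s$, then $\tw(G)\geq s-1$. The steps are as follows.

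\begin{enumerate}
\item By the classical giant-component theory, whp $G$ has a component with a linear number of vertices. Its $2$-core also has linearly many vertices.
\item The kernel of the giant is obtained by contracting the degree-$2$ vertices of the $2$-core into edges. Conditioned on its degree sequence, the kernel is a uniformly random multigraph with minimum degree $3$ in the configuration model. Its number of vertices is $N=\Theta(n)$ whp.
\item Show by a first-moment count over sets $U$ with $|U|\leq N/2$ that such a random multigraph is whp an edge expander: $e(U,\overline U)\geq \beta|U|$ for some constant $\beta>0$.
\item Transfer expansion back to $G$. A vertex separator $S$ of the $2$-core destroys at most $O(|S|)$ kernel edges, up to a correction for degree-$2$ paths. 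In addition, whp only $o(n)$ of the kernel edges correspond to paths of length greater than some constant $L$.
\item Take $W$ to be the kernel vertices. Any balanced separator of $W$ must then cut $\Omega(N)$ kernel edges, so it has size $\Omega(n)$. This gives $\tw(G)\geq cn$ whp.
\end{enumerate}

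I expect the main obstacle to be step 3 together with the transfer in step 4. The configuration-model expansion estimate needs care for small sets, where minimum degree $3$ is what makes the count work. For linear-size sets an entropy-versus-probability computation is needed. For step 4 I would first try to bound the total length of long subdivided paths via their geometric tail. If that fails, I would fall back on citing the expansion of the kernel from the literature on the supercritical random graph, which is essentially the route of~\cite{lee2012rank}.
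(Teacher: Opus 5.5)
Your proposal is correct in outline, and it does more than the paper does. The paper gives no proof of this statement. It only invokes the Erd\H{o}s--R\'enyi fact that for $\delta<1/2$ every component whp has at most one cycle (hence treewidth at most $2$), and it imports the linear lower bound for $\delta>1/2$ from Lee, Lee and Oum.

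Your subcritical half is exactly that Erd\H{o}s--R\'enyi fact, proved by a first-moment count over theta and dumbbell subgraphs. It is complete as written and even gives an $O(1/n)$ bound.

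Your supercritical half replaces the citation by the standard anatomy of the giant: 2-core, kernel as a minimum-degree-$3$ configuration model, edge expansion, and the balanced-separator bound on treewidth. This explains \emph{why} the treewidth is linear: the giant contains a subdivided expander on $\Theta(n)$ branch vertices. It is not, however, more self-contained than the citation. Steps 2 and 3, the distributional description of the kernel and its expansion, are substantial results you would still have to import or work out in full.

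Two details in step 4 need fixing.

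First, the bound ``$O(|S|)$'' is wrong. A separator vertex that is a kernel vertex $v$ destroys $\deg_K(v)$ kernel edges, and kernel degrees are unbounded, of order $\log n/\log\log n$ at the top. Bounding by the maximum degree would only give $|S|=\Omega(n/\log n)$. What you need is the following: whp the sum of the $s$ largest kernel degrees is at most $f(s/n)\,n$, with $f(x)\to 0$ as $x\to 0$. This follows from the exponential tail of the kernel degree distribution. Then a set $S$ with $|S|\le cn$ and $c$ small destroys fewer than the $\Omega(N)$ kernel edges that expansion forces across any balanced split of the kernel vertices, which is the contradiction you want.

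Second, the long-path claim is both false as stated and unnecessary.
\begin{itemize}
\item It is false because subdivision lengths in the supercritical giant are asymptotically i.i.d.\ geometric with constant mean. So for any fixed $L$, a positive constant fraction of kernel edges have length exceeding $L$ (exponentially small in $L$, but not $o(1)$).
\item It is unnecessary because, with $W$ the set of kernel vertices, every degree-$2$ vertex of the 2-core lies on exactly one kernel path. Deleting it kills exactly one kernel edge, whatever the length of that path.
\item Moreover, a simple path between two 2-core vertices never enters a pendant tree. So connectivity of $W$ in $G-S$ is exactly connectivity in the kernel minus $S$ and minus the destroyed edges.
\end{itemize}
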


Since treewidth is a monotone graph parameter, i.e., if $G$ is a subgraph of $H$ then $\tw(G) \leq \tw(H)$,~\cite[Corollary 1.16]{JansonLR00} tells us that Theorem~\ref{theorem:tw_threshold_for_graphs} also applies to $\calGr(n,m = \delta n)$.

\begin{theorem}\label{theorem:tw_threshold_for_graphs_uniform}
Let $\delta \ge 0$ a fixed constant, $n \in \mathbb{N}$ and $G \sim \calGr(n,\delta n)$. If $\delta < 1/2$ then $\lim_{n \rightarrow \infty} \Pr[\tw(G) \geq 3] = 0$. Furthermore, there is a constant $c > 0$ such that, if $\delta > 1/2$ then $\lim_{n \rightarrow \infty} \Pr[\tw(G) \leq cn] = 0$. 
\end{theorem}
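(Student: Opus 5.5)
The statement to prove is Theorem~\ref{theorem:tw_threshold_for_graphs_uniform}. I would transfer Theorem~\ref{theorem:tw_threshold_for_graphs} from $\calGr(n,p)$ to $\calGr(n,m)$ with the standard monotone-property equivalence of~\cite[Corollary 1.16]{JansonLR00}. For a fixed constant $c$, both events $\{\tw(G)\geq 3\}$ and $\{\tw(G) > cn\}$ are increasing graph properties. This holds because treewidth is monotone under taking subgraphs, so adding edges to a graph in the event keeps it in the event. The corollary then says the following: if an increasing property $Q$ holds with probability tending to $0$ (resp.~$1$) in $\calGr(n,p)$ for every $p = m/\binom{n}{2} + O(\sqrt{m(N-m)/N^3})$ perturbation around $m/N$ (with $N=\binom n2$), then the same holds in $\calGr(n,m)$. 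Concretely, I would use the sandwich form: for increasing $Q$ and $p_- < m/N < p_+$ with the gap dominating the standard deviation, $\Pr_{\calGr(n,p_-)}[Q] - o(1) \le \Pr_{\calGr(n,m)}[Q] \le \Pr_{\calGr(n,p_+)}[Q] + o(1)$.

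\textbf{Case $\delta<1/2$.} Set $m=\delta n$, so $m/N \approx 2\delta/n$. I pick $\delta' \in (\delta, 1/2)$ and $p_+ = 2\delta'/n$. The number of edges of $\calGr(n,p_+)$ is binomial with mean about $\delta' n > m$ and standard deviation $O(\sqrt n)$. Hence w.h.p.\ it has at least $m$ edges, and one can couple so that $\calGr(n,m)$ is a subgraph of $\calGr(n,p_+)$ w.h.p.: conditioned on having $M \ge m$ edges, the graph is uniform on $M$ edges, and a uniform $m$-subset of those is distributed as $\calGr(n,m)$. By monotonicity, $\Pr[\tw(\calGr(n,m))\ge 3] \le \Pr[\tw(\calGr(n,p_+))\ge 3] + o(1)$. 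Theorem~\ref{theorem:tw_threshold_for_graphs} applied with $\delta'<1/2$ makes this tend to $0$.

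\textbf{Case $\delta>1/2$.} I pick $\delta'' \in (1/2,\delta)$ and $p_- = 2\delta''/n$. Symmetrically, w.h.p.\ $\calGr(n,p_-)$ has fewer than $m$ edges and embeds into $\calGr(n,m)$ under the analogous coupling. Then $\Pr[\tw(\calGr(n,m)) \le cn] \le \Pr[\tw(\calGr(n,p_-))\le cn] + o(1) \to 0$, where $c=c(\delta'')$ is the constant given by Theorem~\ref{theorem:tw_threshold_for_graphs} for $\delta''$. This requires that constant to be allowed to depend on $\delta$, which matches the statement's wording ``there is a constant $c>0$''.

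The only delicate point is the second case. The quantifier on $c$ must be handled so that a single $c$ works. Choosing one intermediate $\delta''$ fixes this, since $c$ then depends only on $\delta''$. The remaining steps, namely the concentration of the binomial edge count (Chebyshev suffices) and the monotone coupling, are routine and are exactly the content of~\cite[Corollary 1.16]{JansonLR00}.
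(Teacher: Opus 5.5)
Your proof is correct and follows the paper's route. The paper notes that treewidth is monotone under adding edges and cites \cite[Corollary 1.16]{JansonLR00} to pass from $\calGr(n,2\delta/n)$ to $\calGr(n,\delta n)$. Your explicit coupling through intermediate constants $\delta'\in(\delta,1/2)$ and $\delta''\in(1/2,\delta)$ spells out that citation. It is arguably more careful than the paper, because it only needs Theorem~\ref{theorem:tw_threshold_for_graphs} at fixed constants. A literal use of the corollary would instead need the $\calGr(n,p)$ statement over a whole window of $p$ around $m/\binom{n}{2}$.
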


The distribution for the primal graph of a CNF in $\calF_2(n,\delta n)$ is neither $\calGr(n,2\delta/n)$ nor $\calGr(n,\delta n)$ because several clauses can contribute to the same edge in the primal graph. However, one can show that the number of edges corresponding to more than one clause rarely exceeds $\log(n)$ so that, modulo the addition of at most $\log(n)$ random edges, the primal graph of the random CNF is distributed as $\calGr(n,\delta n)$. Thus, the primal treewidth of the random CNF is dominated by the treewidth of a random graph from $\calGr(n,\delta n)$ but rarely differs by more than $O(\log(n))$. Therefore, the constant-to-$\Omega(n)$ threshold of Theorem~\ref{theorem:tw_threshold_for_graphs_uniform} also applies to the primal treewidth of a random CNF in $\calF_2(n,\delta n)$.

\begin{restatable}[$\star$]{lemma}{twThresholdForCNF}\label{lemma:tw_threshold_for_2CNF}
Let $\delta \ge 0$ a fixed constant, $n \in \mathbb{N}$ and $F \sim \calF_2(n,\delta n)$. If $\delta < 1/2$ then $\lim_{n \rightarrow \infty} \Pr[\tw(G_F) \geq 3] = 0$. Furthermore, there is a constant $c > 0$ such that, if $\delta > 1/2$ then $\lim_{n \rightarrow \infty} \Pr[\tw(G_F) \leq cn] = 0$. 
\end{restatable}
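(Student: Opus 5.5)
The plan is to couple the primal graph $G_F$ with a random graph from $\calGr(n,m')$ and then transfer Theorem~\ref{theorem:tw_threshold_for_graphs_uniform}, using that treewidth is monotone under taking subgraphs.

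First, describe the law of $G_F$ for $F \sim \calF_2(n,m)$ with $m=\delta n$. Each clause $C \in Cl_{2,n}$ determines one pair $\{x,y\}$, and every pair is hit by exactly $4$ clauses. Drawing $m$ distinct clauses uniformly is therefore the same as the following procedure:
\begin{itemize}
\item choose uniformly a multiset of pairs, where each pair has multiplicity at most $4$ and the weights are products of binomials $\binom{4}{k}$;
\item then choose the signs of the literals.
\end{itemize}
The edge set $E(G_F)$ is the support of that multiset. By symmetry under permutations of $X_n$, conditioned on $|E(G_F)| = j$ the edge set is a uniformly random $j$-subset of the $\binom{n}{2}$ pairs. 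Hence $G_F \sim \calGr(n,J)$, where $J$ is a random variable and the graph is uniform given $J$.

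Second, show that $J$ is concentrated. Let $D = m - J$ count the excess clauses, i.e. the clauses that share a pair with an earlier clause in some fixed ordering. The probability that a given pair of clauses shares a variable pair is $O(1/n^2)$. There are $O(m^2) = O(n^2)$ such pairs, so $\Ex[D] = O(1)$. By Markov's inequality, $\Pr[D > \log n] \to 0$. Therefore, with high probability,
\[
\delta n - \log n \le J \le \delta n .
\]

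Third, transfer the thresholds, using monotonicity together with the standard coupling of $\calGr(n,j)$ over $j$: adding uniformly random edges one at a time yields nested graphs.

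\textbf{Case $\delta < 1/2$.} Condition on $J = j \le \delta n$. Then $\calGr(n,j)$ is stochastically dominated by $\calGr(n,\lfloor\delta n\rfloor)$. Since $\tw \ge 3$ is a monotone increasing property, we get
\[
\Pr[\tw(G_F)\ge 3] \le \Pr[\tw(\calGr(n,\delta n))\ge 3],
\]
and the right-hand side tends to $0$ by Theorem~\ref{theorem:tw_threshold_for_graphs_uniform}.

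\textbf{Case $\delta > 1/2$.} Pick $\delta'$ with $1/2 < \delta' < \delta$. For large $n$ we have $\delta n - \log n \ge \delta' n$. So, with high probability, $J \ge \delta' n$, and $G_F$ stochastically dominates $\calGr(n,\delta' n)$. Let $c'$ be the constant that Theorem~\ref{theorem:tw_threshold_for_graphs_uniform} gives for $\delta'$. Then
\[
\Pr[\tw(G_F)\le c' n] \le \Pr[J<\delta' n] + \Pr[\tw(\calGr(n,\delta' n))\le c'n] \to 0 .
\]
This gives the statement with $c = c'$. Note that the constant theorem in the paper is stated with a single $c$. If $c$ is meant to be independent of $\delta$, we would instead rely on the cited result holding for every $\delta' > 1/2$, which it does, since it is a per-$\delta$ constant.

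The main obstacle is making the first step rigorous. The reduction to "uniform given $J$" needs the permutation-symmetry argument stated carefully, because the multiplicity weights are not uniform over multisets. The bound on $\Ex[D]$ is a routine computation. The rest is monotone coupling.
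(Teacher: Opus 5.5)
Your argument is correct in outline. For $\delta<1/2$ it agrees with the paper: dominate $G_F$ by a $\calGr(n,\delta n)$ graph and use monotonicity. For $\delta>1/2$ you take a different route.

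The paper pads $G_F$ with $\delta n-|E(G_F)|$ uniformly random missing edges to get $G'_F\sim\calGr(n,\delta n)$ with $G'_F\supseteq G_F$. It shows that w.h.p.\ only $O(\log n)$ edges were added. It then uses that deleting an edge lowers treewidth by at most one, so $\tw(G_F)\ge \tw(G'_F)-O(\log n)$ at the original density $\delta$.

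You use only monotonicity of the event $\{\tw\le c'n\}$, together with the nested coupling of $\calGr(n,j)$ in $j$. You pay for this by dropping to a density $\delta'\in(1/2,\delta)$, so your constant is $c(\delta')$. That is harmless: the constant is per-$\delta$ exactly as in Theorem~\ref{theorem:tw_threshold_for_graphs_uniform}, so your closing hedge about a $\delta$-independent $c$ is unnecessary. Your version has the advantage of not needing any Lipschitz property of the parameter. Your estimate $\Ex[D]=O(1)$ is also the right one: it is $\Theta(1)$, not $o(1)$ as the paper's computation states, but Markov at level $\log n$ suffices either way.

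The one step whose stated justification fails is ``uniform given $J$''. Symmetry under permutations of $X_n$ only shows that edge sets in the same isomorphism class are equally likely. It never exchanges, say, a $2$-edge path with a $2$-edge matching. The correct reason is the count you already half-wrote in your first step. The number of $m$-sets of clauses whose set of variable pairs is exactly $E$ is
\[
\sum_{\substack{(k_e)_{e\in E},\ 1\le k_e\le 4\\ \sum_e k_e=m}}\ \prod_{e\in E}\binom{4}{k_e},
\]
which depends only on $|E|$. Equivalently, the uniform law on $m$-subsets of $Cl_{2,n}$ is invariant under every permutation of the $\binom{n}{2}$ pairs, lifted to their four clauses, and that group is transitive on $j$-sets of pairs. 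With this repair your proof is complete. The paper's assertion $G'_F\sim\calGr(n,\delta n)$ silently relies on the very same fact, so your proof is, if anything, more explicit on this point.
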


\begin{figure}
\centering
\begin{tikzpicture}[xscale=1.2]
\node (0) at (0,-0.5) {$0$};
\node (0) at (3.33,-0.5) {$0.5$};
\node (0) at (6.66,-0.5) {$1$};
\node (0) at (10,-0.5) {$\delta$};
\draw[|-] (0,0) -- (3.33,0);
\draw[|-] (3.33,0) -- (6.66,0);
\draw[|-latex] (6.66,0) -- (10,0);

\draw[dashed] (3.33,0) -- (3.33,2.5);
\draw[dashed] (6.66,0) -- (6.66,2.5);

\draw[thick,blue,latex-latex] (0,1) to node[inner sep=2,midway, fill=white, text=blue] {SAT} (6.66,1);
\draw[thick,red,latex-latex] (6.66,1) to node[inner sep=2,midway, fill=white, text=red] {UNSAT} (10,1);

\draw[thick,blue,latex-latex] (0,2.05) to node[inner sep=2,midway, fill=white, text=blue] {tw $O(1)$} (3.33,2.05);
\draw[thick,red,latex-latex] (3.33,2.05) to node[inner sep=2,midway, fill=white, text=red] {tw $\Omega(n)$} (10,2.05);
\end{tikzpicture}
\caption{The satisfiability and primal treewidth thresholds for $\calF_2(n,\delta n)$.}
\label{figure:satisfiability_and_treewidth_threshold}
\end{figure}
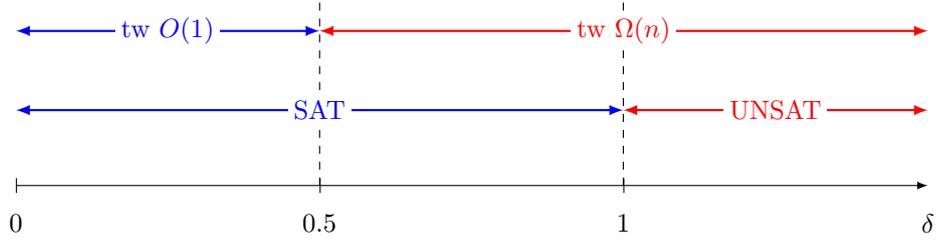

Figure~\ref{figure:satisfiability_and_treewidth_threshold} provides a visual summary of Theorem~\ref{theorem:sat_threshold_for_2CNF} and Lemma~\ref{lemma:tw_threshold_for_2CNF}.

\section{Compilability Thresholds}

The property of a 2-CNF we are interested in is its OBDD-size, that is, the size of the smallest OBDD that computes it. \medskip

\emph{Binary decision diagrams} (BDD) are well-known decision diagram representations of Boolean functions. A BDD is a directed acyclic graph with a single source and one or two sinks labeled $0$ and $1$. Each internal node is labeled with a Boolean variable and has exactly two outgoing edges called \emph{$0$-edge} and \emph{$1$-edge}. The set of variables labeling nodes of a BDD $B$ is denoted $\var(B)$. A BDD is \emph{ordered} (OBDD) when, on every source-to-sink path, each variable appears at most once and always in the same order. This total order is called the \emph{variable order} of the OBDD. In a BDD $B$, every complete assignment $\alpha$ to the variables corresponds to a single path: we start from the source and follow the $\alpha(x)$-edge for every node labeled with a variable $x$ that is reached; the label of the sink reached is the value $B(\alpha)$ computed by $B$ on $\alpha$. Note that $B(\alpha)$ is also defined for $\alpha$ an assignment to any superset $X$ of $\var(B)$. Let $f$ be a Boolean function over $X$. Then we say $B$ computes $f$ when $B(\alpha) = f(\alpha)$ for every $\alpha \in \{0,1\}^X$. 

\begin{definition}[OBDD-size]
The size of an OBDD is its number of nodes, including sinks. $\obddSize(f)$ is the minimal size of an OBDD computing the Boolean function $f$.
\end{definition}

\noindent The OBDD-size of a Boolean formula $F$, written $\obddSize(F)$, is the OBDD-size of the Boolean function over $\var(F)$ that maps to $1$ an assignment to $\var(F)$ if and only if that assignment is in $\sat(F)$. Contrary to satisfiability or primal treewidth, OBDD-size is not monotone in the sense that, for a CNF $F$ and a subformula $F'$ of $F$, we can have $\obddSize(F) \leq \obddSize(F')$ or $\obddSize(F) \geq \obddSize(F')$.

\subsection{Compilability Threshold for Monotone $k$-CNF}

As a warm-up, we consider \emph{monotone formulas}.
A monotone CNF formula is a CNF formula where every literal is positive. We denote by $\calF^m_2(n,\delta n)$ the uniform distribution for monotone 2-CNF over variables $X_n$ containing exactly $\delta n$ \emph{monotone clauses} of length $2$. Since monotone formulas are always satisfiable, their OBDD-sizes do not exhibit a threshold behavior at $\delta = 1$. Understanding the compilability threshold is arguably easier for $\calF^m_2(n,\delta n)$ than for $\calF_2(n,\delta n)$ since the OBDD-size of monotone CNF is fairly well-understood already. 

For $G$ a graph, we denote by $\Delta(G)$ the maximum degree of a vertex of $G$.

\begin{theorem}[{\cite[Theorem 7.1]{amarilli2020connecting}}]\label{theorem:OBDD_size_bound_monotone_CNF}
Let $k \geq 1$. For every $n$-variable monotone $2$-CNF $F$, 
the OBDD-size of $F$ is at least 
$2^{\pw(G_F)/ (8\Delta(G_F)^2)} / n$.
\end{theorem}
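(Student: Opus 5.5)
This is a result cited from \cite[Theorem 7.1]{amarilli2020connecting}, so we would reprove it along the lines of that work: lower-bound OBDD-size by the number of distinct subfunctions at a well-chosen cut of the variable order, and find a large induced matching across that cut.

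Fix any variable order $\pi = x_{\sigma(1)},\dots,x_{\sigma(n)}$ and an OBDD $B$ for $F$ respecting $\pi$. For each $i$, let $L_i$ be the first $i$ variables and $R_i$ the rest. A standard fact is that $B$ has at least as many nodes as the number of distinct subfunctions $F|_\alpha$ over $R_i$, for $\alpha \in \{0,1\}^{L_i}$, minus a constant. Indeed, distinct subfunctions must end at distinct nodes just below the cut, since every path crosses level $i$.

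\textbf{Step 1: a large cut.} The cut edges of $G_F$ at position $i$ are the edges between $L_i$ and $R_i$. The vertex separation number of $\pi$ is $\max_i |\{v\in L_i : v \text{ has a neighbour in } R_i\}|$, and it is at least $\pw(G_F)$. So some $i$ has a set $S \subseteq L_i$ of at least $\pw(G_F)$ vertices, each with a neighbour in $R_i$.

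\textbf{Step 2: extract an induced matching greedily.} Pick $u\in S$ with a neighbour $v\in R_i$, add the edge $uv$, and delete all vertices at distance at most $1$ from $u$ or $v$ from further consideration. Each step removes at most $2(\Delta+1)$ vertices, and at most $O(\Delta^2)$ vertices of $S$ become unusable, because their chosen partner could be adjacent to the removed vertices. A crude count gives an induced matching $M=\{u_jv_j\}$ of size at least $\pw(G_F)/(8\Delta^2)$. Here "induced" means there are no edges of $G_F$ among $\bigcup_j\{u_j,v_j\}$ other than the matching edges.

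\textbf{Step 3: distinct subfunctions.} Let $T\subseteq[|M|]$. Define $\alpha_T$ on $L_i$ by setting $u_j=0$ for $j\in T$ and assigning $1$ to every other variable of $L_i$. By monotonicity, the $1$s satisfy every clause touching them. For two distinct sets $T\neq T'$, take $j\in T\setminus T'$, and take $\beta$ on $R_i$ with $v_j=0$ and all other variables $1$. Since $M$ is induced, $v_j$'s only zero neighbour under $\alpha_T$ is $u_j$, so $\alpha_T\cup\beta$ falsifies the clause $u_j\vee v_j$. Meanwhile $\alpha_{T'}\cup\beta$ satisfies $F$: every clause contains at least one variable set to $1$, since the only zeros are the $u_k$, $k\in T'$, together with $v_j$, and these form an independent set. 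The independence follows because $u_j$ is $1$ under $T'$, and $v_j$ is not adjacent to any $u_k$ with $k\ne j$ by inducedness. Hence there are $2^{|M|}$ distinct subfunctions.

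Combining the steps gives OBDD-size at least $2^{\pw(G_F)/(8\Delta^2)}$, which is at least the stated bound with its $1/n$ slack. The main obstacle is the constant bookkeeping in Step 2. If the greedy count is tight, we would allow the choice of cut $i$ to be wasteful, and the $1/n$ factor absorbs that loss.
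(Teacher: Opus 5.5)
Your proof is correct, and it takes a genuinely different route from the paper. The paper does not reprove the result. It imports \cite[Theorem 7.1]{amarilli2020connecting} as a black box: a lower bound on the \emph{width} of \emph{nondeterministic} OBDDs that are assumed complete. It then converts width to size, using that size is at least width and that completing an OBDD multiplies its size by at most $n$; this conversion is where the $1/n$ comes from.

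You instead give a self-contained deterministic argument in three steps. First, vertex separation number equals pathwidth (Kinnersley), so the OBDD's own order has a cut with at least $\pw(G_F)$ left vertices that have right neighbours. Second, a greedy procedure extracts an induced matching across that cut. Third, monotonicity turns the matching into $2^{|M|}$ pairwise distinct subfunctions $F|_\alpha$. Counting subfunctions works for any OBDD, complete or not: distinct $F|_\alpha$ end at distinct nodes, namely the first node on $\alpha$'s path labelled by a variable of $R_i$, or a sink. So you need neither completeness nor the factor $n$, and you actually get the stronger bound $2^{\pw(G_F)/(8\Delta(G_F)^2)}$. What the cited route buys in exchange is generality (nondeterministic devices, width rather than size), which this paper never uses.

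Two minor points. First, ``every path crosses level $i$'' is not literally true in a non-complete OBDD. The first-node-in-$R_i$ formulation repairs it and also makes the ``minus a constant'' unnecessary.

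Second, your closing hedge is wrong as stated. A $1/n$ factor cannot absorb a constant-factor loss in an exponent that is linear in $\pw(G_F)$. Fortunately no loss occurs. Every vertex of $S$ that is unusable when the greedy stops lies within distance $2$ of some $u_j$ or $v_j$. Hence $|S| \le |M|\cdot(2+2\Delta^2)$, which gives $|M| \ge |S|/(4\Delta^2)$ for $\Delta \ge 1$, comfortably above $\pw(G_F)/(8\Delta^2)$.
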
 
\cite[Theorem 7.1]{amarilli2020connecting} actually gives a lower bound for non-deterministic OBDDs, which are always smaller than OBDDs. It provides a lower bound on the \emph{width} of the BDD (the maximum number of nodes labeled with the same variable) and supposes \emph{completeness} (no variable is skipped along any path). To get theorem~\ref{theorem:OBDD_size_bound_monotone_CNF}, we use that the complete-OBDD-size of a function is at most $n$ times its OBDD-size, and that size is greater than width.

When $F \sim \calF_2^m(n,\delta n)$ we have that $G_F \sim \calGr(n,\delta n)$. We have $\Delta(G_F) < \log(n)$ with high probability. Combined with Theorem~\ref{theorem:tw_threshold_for_graphs_uniform}, this allows us to show that $\delta =1/2$ is a compilability threshold for $\calF^m_2(n,\delta n)$.

\begin{restatable}[$\star$]{lemma}{smallDegree}\label{lemma:small_degree}
Let $G \sim \calGr(n,\delta n)$ with $\delta \geq 0$ a constant, then $\lim\limits_{n \rightarrow \infty}\Pr[\Delta(G) \geq \log(n)] = 0$.
\end{restatable}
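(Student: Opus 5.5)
We bound the probability that some fixed vertex has large degree and then take a union bound over all $n$ vertices. Set $m = \delta n$ and $N = \binom{n}{2}$. Fix a vertex $v \in V_n$. In $G \sim \calGr(n,m)$ there are $n-1$ potential edges incident to $v$. For a fixed set $S$ of $k$ such edges, the probability that all of them belong to $G$ is
\[
\frac{\binom{N-k}{m-k}}{\binom{N}{m}} = \prod_{i=0}^{k-1}\frac{m-i}{N-i} \le \left(\frac{m}{N}\right)^k ,
\]
where the inequality uses $m \le N$. If $\deg(v) \geq k$, then some $k$-subset of the incident edges is entirely present. A union bound over these subsets therefore gives
\[
\Pr[\deg(v) \geq k] \le \binom{n-1}{k}\left(\frac{m}{N}\right)^k \le \frac{n^k}{k!}\left(\frac{2\delta n}{n(n-1)}\right)^k \le \frac{(4\delta)^k}{k!}
\]
for $n \geq 2$. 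This avoids any direct computation with the hypergeometric distribution. Alternatively, one could prove the statement in $\calGr(n,p=2\delta/n)$, where $\deg(v)\sim\mathrm{Bin}(n-1,p)$ has a Chernoff tail, and transfer it with~\cite[Corollary 1.16]{JansonLR00}, since $\Delta(G)\ge \log n$ is a monotone increasing property. The direct bound above is simpler.

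Next take $k = \lceil \log n \rceil$ and apply a union bound over the $n$ vertices:
\[
\Pr[\Delta(G) \geq \log n] \le n \cdot \frac{(4\delta)^k}{k!} .
\]
By Stirling, $k! \geq (k/e)^k$, so the right-hand side is at most $n\,(4e\delta/k)^k$. Once $n$ is large enough that $k \geq 4e^2\delta$, this is at most $n e^{-k} \le n \cdot n^{-1}$ if $\log$ is natural, which gives only a constant bound.

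This step needs more care, and it is the main obstacle: the bound must be strictly $o(1)$ and not merely bounded, and the base of $\log$ matters. The fix is to use the super-exponential decay of $k!$. Since $k \to \infty$, we have $(4e\delta/k)^k \le e^{-ck}$ for every constant $c$ once $k$ is large. Choosing $c$ with $c\log_2 e > 1$ handles a base-$2$ logarithm, and any $c>1$ handles the natural logarithm. Concretely, $(4e\delta/k)^k = \exp(-k\ln(k/(4e\delta)))$, and $k\ln(k/(4e\delta)) = \Theta(\log n \cdot \log\log n)$. Hence $n\,(4e\delta/k)^k = \exp(\ln n - \Theta(\log n\log\log n)) \to 0$ whatever the base of $\log$. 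This proves $\lim_{n\to\infty}\Pr[\Delta(G)\ge\log n]=0$. The case $\delta = 0$ is trivial, since the graph is then empty.
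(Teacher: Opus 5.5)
Your proof is correct and is essentially the paper's argument. Both take a union bound over the $n$ vertices and over the $\lceil\log n\rceil$-subsets of edges at a vertex, bound each subset by $(m/N)^{\lceil\log n\rceil}$, and conclude from $n\,(4e\delta/\log n)^{\log n}\to 0$. The only differences are minor: you work directly in $\mathcal{G}(n,\delta n)$, whereas the paper nominally passes through $\mathcal{G}(n,2\delta/n)$ and a monotone-property transfer (its computation is the same), and you justify the final limit, which the paper only asserts.
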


\begin{theorem}[Compilability Thresholds for monotone 2-CNF]\label{theorem:compilability_threshold_monotone}
Let $\delta \geq 0$ be a constant, $n \in \mathbb{N}$ and $F \sim \calF^{m}_2(n,\delta)$. There are constants $c,d > 0$ such that, 
\begin{enumerate}[(I)]
\item if $\delta < 1/2$, then $\lim_{n \rightarrow \infty} \Pr[\obddSize(F) \leq c n] = 1$;\label{item:mI}
\item if $1/2 < \delta$, then $\lim_{n \rightarrow \infty} \Pr[\obddSize(F) \geq \exp(d\frac{n}{\log^2(n)})] = 1$.;\label{item:mII}
\end{enumerate}
\end{theorem}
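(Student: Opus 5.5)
The key observation is that for $F \sim \calF^m_2(n,\delta n)$ the primal graph $G_F$ is distributed exactly as $\calGr(n,\delta n)$. Each monotone clause $x_i \vee x_j$ corresponds to exactly one edge $\{x_i,x_j\}$, and conversely. So choosing $\delta n$ distinct monotone clauses uniformly is the same as choosing $\delta n$ distinct edges uniformly. Both items then follow by combining Theorem~\ref{theorem:tw_threshold_for_graphs_uniform} and Lemma~\ref{lemma:small_degree} with structural bounds on OBDD-size.

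\textbf{Item (I).} By Theorem~\ref{theorem:tw_threshold_for_graphs_uniform}, with high probability $\tw(G_F) \leq 2$. By the Erd\H{o}s--R\'enyi structure, the stronger statement also holds: every component is a tree or unicyclic. I would give an explicit linear-size OBDD for such formulas rather than go through a generic pathwidth bound, since $\pw$ may be $\Theta(\log n)$ and would only give a polynomial bound.

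Take the components of $G_F$ one at a time and concatenate their OBDDs. The OBDD-size of a conjunction of functions on disjoint variable sets, ordered block by block, is at most the sum of the individual sizes.

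For a tree component I plan to use a DFS-based order and bound the number of distinct subfunctions at each level. The difficulty is that in a star, after reading the center, only the center's value matters, but for general trees the number of "pending" constraints equals the number of frontier vertices. A cleaner route is to make the width bound depend on $\Delta$ and on the structure of the order, using the $\log n$ degree bound of Lemma~\ref{lemma:small_degree}. Honestly, the simplest fully rigorous route is the known fact that a CNF whose primal graph has pathwidth $w$ has OBDD-size at most $n2^{w}$ (order the variables along a path decomposition). Random sparse forests with small components have pathwidth $O(\log\log n)$, or at least $O(\log n)$ via $\pw \le O(\tw \log n)$. This gives a polynomial bound, not $cn$.

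To reach the linear bound I would instead use that for $\delta<1/2$ all components have size $O(\log n)$ with high probability. Then I would argue directly on a monotone tree 2-CNF with an order that reads each vertex right after its parent (BFS order). The subfunction after a prefix is determined by which already-read vertices are still "unsatisfied", i.e. set to $0$ with unread neighbours. In BFS order the set of such vertices can still be large.

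\textbf{Main obstacle.} I expect the main obstacle to be exactly this step: proving $O(n)$ rather than $n^{O(1)}$. The first thing I would try is a recursive bound on the OBDD-size of a tree with a caterpillar-like order. If that fails, I would settle for the polynomial bound from pathwidth $O(\log n)$, giving $n^{O(1)}$, which is all that Theorem~\ref{theorem:compilability_thresholds} needs.

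\textbf{Item (II).} This is direct. With high probability $\tw(G_F) > c'n$ by Theorem~\ref{theorem:tw_threshold_for_graphs_uniform}, so $\pw(G_F) \ge \tw(G_F) > c'n$. With high probability $\Delta(G_F) < \log n$ by Lemma~\ref{lemma:small_degree}. A union bound shows both hold simultaneously with probability tending to $1$. Theorem~\ref{theorem:OBDD_size_bound_monotone_CNF} then gives
\[
\obddSize(F) \ge 2^{c'n/(8\log^2 n)}/n \ge \exp\!\left(d\,\frac{n}{\log^2 n}\right)
\]
for a suitable $d>0$ and large $n$, since the $1/n$ factor is absorbed into the exponent.
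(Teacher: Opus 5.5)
Your item (II) coincides with the paper's argument ($G_F\sim\calGr(n,\delta n)$, Theorem~\ref{theorem:tw_threshold_for_graphs_uniform} with $\tw\le\pw$, Lemma~\ref{lemma:small_degree}, a union bound, then Theorem~\ref{theorem:OBDD_size_bound_monotone_CNF}), and it is correct.

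\textbf{The gap is in item (I).} You never establish the claimed bound $\obddSize(F)\le cn$. The fallback you settle for ($\tw(G_F)\le 2$, hence $\pw(G_F)=O(\log n)$, hence $\obddSize(F)=O(n2^{\pw(G_F)})$) proves only $\obddSize(F)\le n^{1+O(1)}$. This fallback is exactly the paper's proof of (I), and the paper's last step has the same defect: with $\pw(G_F)\le d\log n$, the bound $O(n2^{\pw(G_F)})$ is $O(n^{1+d})$, not $O(n)$.

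This is not a matter of constants. Every fixed tree appears as a component of $G_F$ linearly often with high probability. So for each fixed $p$ there is a tree component of pathwidth $p$, and $\pw(G_F)\to\infty$ in probability. Hence no bound through the global pathwidth can be linear. Your diagnosis of the obstacle is correct. As written, both your proposal and the paper support only the polynomial bound that Theorem~\ref{theorem:compilability_thresholds}~(I) needs.

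\textbf{How to close it.} The missing idea is to bound component by component and use that large components are rare; you do not need a clever order for each tree.
\begin{enumerate}
\item With high probability every component $C$ of $G_F$ is a tree or unicyclic.
\item Chain OBDDs for the subformulas $F_C$, redirecting the $1$-sink of one to the root of the next. This gives $\obddSize(F)\le\sum_C\obddSize(F_C)$.
\item A tree on $k$ vertices has pathwidth at most $\log_3(2k+1)$, and a unicyclic graph at most one more. So $2^{\pw(C)}=O(|C|)$, and the crude $O(|C|2^{\pw(C)})$ bound already gives $\obddSize(F_C)=O(|C|^2)$.
\item Hence $\obddSize(F)=O\bigl(\sum_C|C|^2\bigr)=O\bigl(\sum_v|C(v)|\bigr)$, where $C(v)$ is the component containing $v$.
\end{enumerate}
For $2\delta<1$ one has $\Ex[|C(v)|]=O(1)$. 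Markov's inequality alone, however, bounds the failure probability only by $O(1/c)$, which does not vanish for a fixed $c$. You additionally need concentration of the susceptibility $\frac1n\sum_v|C(v)|$. This can come from a second-moment computation, or from its known convergence in probability to $1/(1-2\delta)$ in $\calGr(n,2\delta/n)$. The transfer to $\calGr(n,\delta n)$ works as elsewhere in the paper, since $\sum_C|C|^2$ only grows when edges are added. With this ingredient, item (I) holds as stated.
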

\begin{proof}

Since $G_F \sim \calGr(n,\delta n)$, the case $\delta > 1/2$ follows from Theorem~\ref{theorem:tw_threshold_for_graphs_uniform},  Theorem~\ref{theorem:OBDD_size_bound_monotone_CNF},   Lemma~\ref{lemma:small_degree} and $\tw(G) \leq \pw(G)$. For $\delta < 1/2$, since $\pw(G) \le O(\tw(G)\log(n))$~\cite[Corollary 24]{Bodlaender98},
by Theorem~\ref{theorem:tw_threshold_for_graphs_uniform} there is a constant $d > 0$ such that $\lim_{n \rightarrow \infty} \Pr[\pw(G_F) \leq d \log(n)] = 1$. The existence of a constant $c > 0$ such that $\lim_{n \rightarrow \infty} \Pr[\obddSize(F) \leq c n] = 1$ then  follows from the well-known $O(n2^{\pw(G_F)})$ upper bound on the OBDD-size of $F$~\cite[Theorem~2.1]{ferrara2005treewidth}.
\end{proof}
 Theorem~\ref{theorem:compilability_threshold_monotone} (\ref{item:mII}) may be surprising at first glance. Since the expected number of solutions decreases as $\delta$ increases, one could suspect that, for $\delta$ large enough, most formulas in $\calF^m_2(n,\delta n)$ have so few solutions that they admit small-size OBDDs. But this argument requires $\delta$ to be a function of $n$ (for dense CNF), whereas  we fix $\delta$ to a constant (for sparse CNF).

\subsection{Compilability Thresholds for $2$-CNF}

Our main result is the proof that, in the case of general 2-CNF formulas drawn $\calF_2(n,\delta n)$, there are two compilability thresholds, namely $\delta = 1/2$ and $\delta = 1$. Figure~\ref{figure:compilability_threshold} provides a visual summary of this phenomenon. It is no coincidence that these are exactly the satisfiability threshold and the primal treewidth threshold.  We restate Theorem~\ref{theorem:compilability_thresholds} for the reader's convenience.  

\compilabilityThreshold*
\medskip
 
(\ref{item:III}) is immediate from the satisfiability threshold and (\ref{item:I}) is proved in almost the same way as (\ref{item:mI}) for Theorem~\ref{theorem:compilability_threshold_monotone}.  Proving (\ref{item:II}) requires a lot more work and is essentially the content of the next four sections.

\begin{proof}[Proof of Theorem~\ref{theorem:compilability_thresholds} (\ref{item:I})]
Let $\delta < 1/2$ and $F \sim \calF_2(n, \delta n )$. By Lemma~\ref{lemma:tw_threshold_for_2CNF}, $\lim_{n \rightarrow \infty} \Pr[\tw(F) \leq 2] = 1$. Since $\pw(F) \leq O(\log(n)\tw(F))$~\cite[Corollary 24]{Bodlaender98}, this means that $\lim_{n \rightarrow \infty} \Pr[\pw(F) \leq O(\log(n))] = 1$. Using that the OBDD-size of $F$ is at most $O(n2^{pw(F)})$~\cite[Theorem~2.1]{ferrara2005treewidth} we get $\lim_{n \rightarrow \infty} \Pr[\obddSize(F) \leq n^c] = 1$ for some $c > 0$.
\end{proof}

\begin{proof}[Proof of Theorem~\ref{theorem:compilability_thresholds} (\ref{item:III})] Let $\delta > 1$ and $F \sim \calF_2(n, \delta n )$. Then Theorem~\ref{theorem:sat_threshold_for_2CNF} tells us that $\lim_{n \rightarrow \infty} \Pr[F \in \SAT] = 0$. The OBDD-size of an unsatisfiable formula is $1$ (the OBDD is just the $0$-sink), hence $\lim_{n \rightarrow \infty} \Pr[\obddSize(F) = 1] = 1$. 
\end{proof}

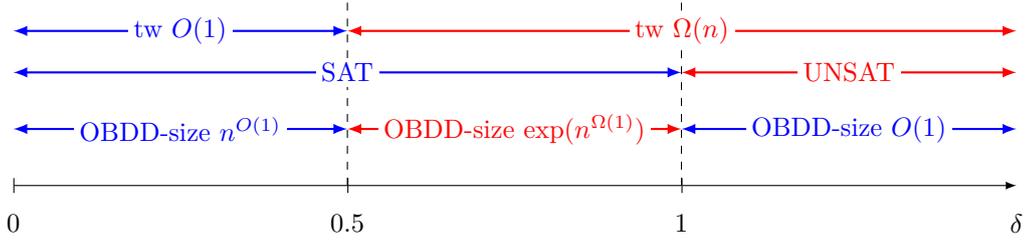
\begin{figure}
\centering
\begin{tikzpicture}[xscale=1.32]
\node (0) at (0,-0.5) {$0$};
\node (0) at (3.33,-0.5) {$0.5$};
\node (0) at (6.66,-0.5) {$1$};
\node (0) at (10,-0.5) {$\delta$};
\draw[|-] (0,0) -- (3.33,0);
\draw[|-] (3.33,0) -- (6.66,0);
\draw[|-latex] (6.66,0) -- (10,0);

\draw[dashed] (3.33,0) -- (3.33,2.5);
\draw[dashed] (6.66,0) -- (6.66,2.5);

\draw[thick,blue,latex-latex] (0,1.5) to node[inner sep=2,midway, fill=white, text=blue] {SAT} (6.66,1.5);
\draw[thick,red,latex-latex] (6.66,1.5) to node[inner sep=2,midway, fill=white, text=red] {UNSAT} (10,1.5);

\draw[thick,blue,latex-latex] (0,2.05) to node[inner sep=2,midway, fill=white, text=blue] {tw $O(1)$} (3.33,2.05);
\draw[thick,red,latex-latex] (3.33,2.05) to node[inner sep=2,midway, fill=white, text=red] {tw $\Omega(n)$} (10,2.05);

\draw[thick,blue,latex-latex] (0,0.75) to node[inner sep=2,midway, fill=white, text=blue] {OBDD-size $n^{O(1)}$} (3.33,0.75);
\draw[thick,blue,latex-latex] (6.66,0.75) to node[inner sep=2,midway, fill=white, text=blue] {OBDD-size $O(1)$} (10,0.75);
\draw[thick,red,latex-latex] (3.33,0.75) to node[inner sep=2,midway, fill=white, text=red] {OBDD-size $\exp(n^{\Omega(1)})$} (6.66,0.75);
\end{tikzpicture}
\caption{The thresholds for the OBDD-size of formulas in $\calF_2(n,\delta n)$.}
\label{figure:compilability_threshold}
\end{figure}

In the rest of the paper, we dive into the proof of Theorem~\ref{theorem:compilability_thresholds} (\ref{item:II}). Before that, we provide some intuition behind the proof. We start from an arbitrary minimal-size OBDD $B(F)$ for $F$. This OBDD uses a certain variable order $x_{\pi(1)},\dots,x_{\pi(n)}$. If we split the variable order somewhere, say at position $k$, we get a bipartition of the variables: $x_{\pi(1)},\dots,x_{\pi(k)}$ and $x_{\pi(k+1)},\dots,x_{\pi(n)}$. Variables are vertices of $G_F$, so we have a bipartition of $G_F$'s vertices. When $F$ has high primal treewidth (which happens almost always), it turns out that there is always a position $k$ such that cutting the variable order at $k$ yields a bipartition of $G_F$ such that there is a large matching between the two parts, i.e., many edges of $G_F$ cross from $x_{\pi(1)},\dots,x_{\pi(k)}$ to $x_{\pi(k+1)},\dots,x_{\pi(n)}$ without sharing a vertex (Section~\ref{section:matching}). Each edge in $G_F$ corresponds to at least one clause of $F$, so this large matching corresponds to a subformula $M(F)$ of $F$ that has many solutions. We can show that if exponentially many solutions to $M(F)$ have an extension that satisfies $F$, then the $k$th layer of $B(F)$ has exponentially many nodes (Section~\ref{section:bound}). $B(F)$ is a minimal-size OBDD for $F$, so all OBDDs for $F$ have exponentially many nodes.

To give a simple example, suppose $n = 8$, $\pi$ is the identity, we cut $\pi$ at $3$ and find a matching $\{\{x_1,x_4\},\{x_2,x_5\}, \{x_3,x_6\}\}$ in $G_F$. Say this corresponds to the subformula $M(F) = (x_1 \lor x_4) \land (x_2 \lor x_5) \land (x_3 \lor x_6)$ in $F$ and say all $3^3$ solutions to $M(F)$ have an extension satisfying $F$. We claim that the two assignments $\alpha = (1,0,0)$ and $\beta = (0,0,0)$ to $(x_1,x_2,x_3)$ do not reach the same node in $B(F)$, because
\begin{itemize}
\item $F|\beta$ forces $x_4$ to $1$ due to $x_1 \lor x_4$ while 
\item $F|\alpha$ does not force $x_4$ to $1$ or $0$ since both assignments $(1,0,0,1,1,1)$ and $(1,0,0,0,1,1)$ to $(x_1,x_2,x_3,x_4,x_5,x_6)$ have extensions that satisfy $F$.
\end{itemize}
In fact, in this case, any two assignments to $(x_1,x_2,x_3)$ reach a different node, so there are $2^3$ different nodes in $B(F)$. In Section~\ref{section:bound} we make this argument work when only a constant fraction of $M(F)$'s solutions have an extension satisfying $F$.

Then comes the issue of showing this nice property of $M(F)$. For that, we rely on a recent theorem of~\cite{basse2025regularity,rasmussen2025fixing} that says that for $\delta < 1$, with high probability on $F$, a random assignment of length $o(\sqrt{n})$ is very likely to have an extension satisfying $F$. The problem is that this random assignment is independent of $F$ while solutions to $M(F)$ are not. To overcome this issue, we move the analysis to random formulas sampled from another distribution in which every clause is independent of the others, at the risk of creating duplicate clauses (Section~\ref{section:distribution}).

\section{Random 2-CNF with Clause Replacement}\label{section:distribution}

\subsection{The Distribution $\calH_2(n,m)$}

We are going to prove Theorem~\ref{theorem:compilability_thresholds} (\ref{item:II}) through another $2$-CNF distribution. Given $n, m \in \mathbb{N}$, $\calH_2(n,m)$ is the distribution for the random CNF formula $F = C_1 \land \dots \land C_m$ where each clause $C_i$ is drawn uniformly at random and independently from the set of all $4\binom{n}{2}$ clauses of length $2$ over $X_n = \{x_1,\dots,x_n\}$. For $k \in [m]$ let $F^{\leqslant k} = \bigwedge_{i = 1}^k C_i$ and $F^{> k} = \bigwedge_{i = k+1}^m C_i$. 
$$
F = F^{\leqslant k} \land F^{> k}
$$
A neat property of $\calH_2(n,m)$ is that $F^{\leqslant k} \sim \calH_2(n,k)$ and $F^{> k} \sim \calH_2(n,m-k)$. Even nicer, $F^{\leqslant k}$ is stochastically independent of $F^{> k}$. The main drawback of $\calH_2(n,m)$, on the other hand, is that the same clause may be drawn more than once. We call a CNF formula $F$ \emph{simple} when no two of its clauses are identical. It is not unlikely that $F$ has duplicate clauses, but for $m = \delta n$ with $\delta > 0$ fixed, the probability that $F$ is simple does not vanish to $0$ as $n$ increases. Let $N = 4\binom{n}{2}$. This is the well-known ``birthday problem'': there are $N$ days in a year ($N$ possible clauses), and we draw $m$ people and thus $m$ birthdays ($m$ clauses) at random, what is the probability that two people have the same birthday (that two clauses are identical)? 
Here, since $m=o(N)$ as $n \to\infty$, the probability that $F$ is simple (no two people share a birthday) converges to a positive constant, and in particular remains bounded away from $0$. We can reuse the analysis of~\cite[Section 1.3]{frieze2015introduction}:
\begin{equation}\label{eq:proba-simple}
\begin{aligned}
\Pr\limits_{F \sim \calH_2(n,m)}[F \text{ is simple}] &\geq \frac{\binom{N}{m}m!}{N^m} \geq (1 - o(1))\frac{N^m}{m!}\exp\left(-\frac{m^2}{2N} - \frac{m^3}{6N^2}\right)\frac{m!}{N^m}
\\
&\geq (1 - o(1))\exp\left(-\frac{m^2}{2N} - \frac{m^3}{6N^2}\right) \geq (1-o(1))e^{-\delta^2 - \delta}.
\end{aligned}
\end{equation}
Another crucial feature of $\calH_2(n,m)$ is that a random formula $F$ drawn from $\calH_2(n,m)$, when conditioned on being simple, behaves as if it were drawn from $\calF_2(n,m)$. Formally, let $\mathscr{P}$ be a formula property (a set of formulas), then 
\begin{equation}\label{eq:proba-conditioned}
\Pr\limits_{F \sim \calF_2(n,m)}[F \in \mathscr{P}] = \Pr\limits_{F \sim \calH_2(n,m)}[F \in \mathscr{P} \mid F \text{ is simple}]
\end{equation}
The combination of (\ref{eq:proba-simple}) and (\ref{eq:proba-conditioned}) in a Bayes rule shows that, if the event $F \in \mathscr{P}$ is asymptotically negligible when $F$ is drawn from $\calH_2(n,m)$, then the same event is asymptotically negligible when $F$ is drawn from $\calF_2(n,m)$ (provided $m = \delta n$ and $\delta > 0$). 
\begin{restatable}[$\star$]{lemma}{fromHtoF}\label{lemma:from_H_to_F}
Let $\delta > 0$.  If $\lim\limits_{n \rightarrow \infty}\Pr\limits_{\substack{F \sim \\ \calH_2(n,\delta n)}}[F \in \mathscr{P}] = 0$ then \mbox{$\lim\limits_{n \rightarrow \infty}\Pr\limits_{\substack{F  \sim \\ \calF_2(n,\delta n)}}[F \in \mathscr{P}] = 0$}.
\end{restatable}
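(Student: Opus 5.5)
The plan is to combine the conditioning identity (\ref{eq:proba-conditioned}) with the lower bound (\ref{eq:proba-simple}) via the elementary inequality $\Pr[A \mid B] \leq \Pr[A]/\Pr[B]$. Fix $\delta > 0$, write $m = \delta n$, and let $\mathcal{S}$ be the event that $F \sim \calH_2(n,m)$ is simple. By (\ref{eq:proba-conditioned}),
\begin{equation*}
\Pr_{F \sim \calF_2(n,m)}[F \in \mathscr{P}] = \Pr_{F \sim \calH_2(n,m)}[F \in \mathscr{P} \mid \mathcal{S}] = \frac{\Pr_{F \sim \calH_2(n,m)}[F \in \mathscr{P} \land \mathcal{S}]}{\Pr_{F \sim \calH_2(n,m)}[\mathcal{S}]} \leq \frac{\Pr_{F \sim \calH_2(n,m)}[F \in \mathscr{P}]}{\Pr_{F \sim \calH_2(n,m)}[\mathcal{S}]}.
\end{equation*}
By (\ref{eq:proba-simple}), the denominator is at least $(1-o(1))e^{-\delta^2-\delta}$. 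For $n$ large enough this exceeds $\frac{1}{2}e^{-\delta^2-\delta}$, a positive constant independent of $n$. The numerator tends to $0$ by hypothesis, so the right-hand side is at most $2e^{\delta^2+\delta}\Pr_{\calH_2(n,m)}[F \in \mathscr{P}] \to 0$, which proves the claim.

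The only step needing care is (\ref{eq:proba-conditioned}) itself, which I would justify briefly because the two distributions live on different objects. A formula from $\calH_2(n,m)$ is an ordered sequence $(C_1,\dots,C_m)$, while $\calF_2(n,m)$ picks an $m$-element set of clauses. I would read $\mathscr{P}$ as a property of the underlying clause set, i.e.\ of the conjunction up to reordering. Conditioned on $\mathcal{S}$, the sequence is uniform over the $\binom{N}{m}m!$ sequences of distinct clauses. Each $m$-set of clauses corresponds to exactly $m!$ of these sequences, so the induced clause set is uniform over the $\binom{N}{m}$ sets, which is exactly $\calF_2(n,m)$.

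A minor technicality is that $\delta n$ may not be an integer. I would read $m$ as $\lfloor \delta n \rfloor$ in both distributions, which changes nothing above. Here $\delta > 0$ is what makes the birthday estimate (\ref{eq:proba-simple}) meaningful with $m = \Theta(n)$. For $\delta = 0$ the claim is trivial anyway, since both distributions are the empty formula.
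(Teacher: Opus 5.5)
Your proposal is correct and follows essentially the same route as the paper: rewrite the $\calF_2(n,\delta n)$ probability as a conditional probability in $\calH_2(n,\delta n)$ given simplicity, bound it by $\Pr[F \in \mathscr{P}]/\Pr[F \text{ is simple}]$, and use the constant lower bound on the denominator. Your extra justification of the conditioning identity (each $m$-set of clauses corresponds to exactly $m!$ distinct-clause sequences, for a reordering-invariant $\mathscr{P}$) is sound and fills in a step the paper only asserts.
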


\noindent Thus, to prove Theorem~\ref{theorem:compilability_thresholds} (\ref{item:II}), it is enough to show the correctness of the following lemma.
\begin{lemma}\label{lemma:target_lemma}
Let $1/2 < \delta < 1$ be constant and $F \sim \calH_2(n,\delta n)$. There exists $v > 0$ such that 
$$
\lim\limits_{n \rightarrow \infty}\Pr\left[\text{OBDD-size}(F) \leq 2^{|\var(F)|^v}\right] = 0.
$$
\end{lemma}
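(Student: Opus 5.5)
We follow the outline sketched after the proof of Theorem~\ref{theorem:compilability_thresholds}~(\ref{item:III}), carried out inside $\calH_2(n,\delta n)$ and using clause independence. With high probability, $F \sim \calH_2(n,\delta n)$ has $|\var(F)| = \Theta(n)$. Its primal graph has treewidth at least $cn$ (Lemma~\ref{lemma:tw_threshold_for_2CNF} transfers to $\calH_2$ because duplicates only change multiplicities). Its maximum degree is at most $\log n$ (Lemma~\ref{lemma:small_degree}), so it suffices to show that for some $v>0$ every variable order forces $\exp(n^{v})$ nodes.

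\textbf{Step 1 (matching).} Fix a minimal OBDD $B(F)$ with order $\pi$. Since $\pw \ge \tw \ge cn$ and $\Delta \le \log n$, there is a cut position $k$ whose crossing edges in $G_F$ contain an induced-ish matching of size $s \ge c' n/\log^2 n$. One gets this from the standard fact that some prefix has vertex boundary $\Omega(n)$, then greedily extracts a matching, losing degree factors. We shrink the matching further to a set $M$ of $t = n^{v}$ edges, $t = o(\sqrt n)$, that are pairwise far apart in $G_F$ (distance $\ge 3$, greedily, losing $\mathrm{poly}(\log n)$ factors). Each edge carries at least one clause; the subformula $M(F)$ then has at least $3^t$ solutions, with left endpoints $L$ before the cut and right endpoints $R$ after it.

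\textbf{Step 2 (counting nodes).} We generalise the worked example. Let $S \subseteq \{0,1\}^L$ be the set of assignments $\alpha$ to $L$ such that, for each $i$, there are two assignments to $R$ agreeing with $\alpha$ and satisfying the $i$-th matching clause, and both extend to solutions of $F$. Two assignments to the prefix that differ on $L$ and are consistent with $\alpha\ne\alpha'$ in $S$ reach distinct nodes at layer $k$: at a coordinate where they differ, the matching clause is satisfied by one of them, so the corresponding right variable is free there, while for the other that variable is forced. Hence layer $k$ has at least $|S|$ nodes. It remains to show $|S| \ge 2^{\Omega(t)}$; a constant fraction of $\{0,1\}^L$ together with a Markov-type averaging suffices.

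\textbf{Step 3 (extension property, the main obstacle).} We must show that with high probability over $F$, a constant fraction of assignments to $L\cup R$ that are consistent with $M(F)$ extend to solutions of $F$. We invoke the theorem of~\cite{basse2025regularity,rasmussen2025fixing}: for $\delta<1$, with high probability over $F$, a uniformly random partial assignment on $o(\sqrt n)$ variables extends to a solution with probability $1-o(1)$. The difficulty is that $M$ and its solutions depend on $F$. To decouple, we split $F = F^{\le k'} \wedge F^{>k'}$, with $F^{>k'}$ a tiny fraction $\varepsilon n$ of the clauses. We choose the matching edges using only clauses of $F^{>k'}$; these are still far apart and numerous, since $F^{>k'}$ has $\Theta(n)$ edges spread uniformly, and a linear fraction of pairs separated by the cut lie among them. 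Conditioned on the positions of these few clauses, $F^{\le k'} \sim \calH_2(n,k')$ is independent, and the remaining clauses of $F^{>k'}$ form a subcritical (density $\varepsilon<1/2$) sparse forest-like part that can be absorbed. Since the signs of the matching clauses are uniform and independent of $F^{\le k'}$, a uniformly random solution of $M(F)$ restricted to $L\cup R$ is close to a uniform assignment. Applying the fixing theorem to $F^{\le k'}$ and handling the extra $\varepsilon n$ clauses by a union bound over their few interactions with $L\cup R$ gives the constant fraction. One more subtlety remains: the cut $k$ depends on $\pi$, hence on all of $F$. To handle this, we quantify over all orders by showing the extension property holds simultaneously for every far-apart matching of size $t$ in $F^{>k'}$. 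This is the delicate union-bound step, made feasible by taking $t=n^{v}$ small and using the exponentially small failure probability of the fixing theorem, or alternatively by bounding the number of relevant matchings.
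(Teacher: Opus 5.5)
Your overall architecture matches the paper's: a large matching across a cut of the optimal order, a crossing argument at that layer, and the fixing theorem for extendability. But Step~3 has a genuine gap, exactly at the subtlety you flag.

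The matching is selected through the order of $B(F)$, which depends on all of $F$, signs included. So even ``the signs of the matching clauses are uniform and independent of $F^{\le k'}$'' is unjustified. Your remedy is to prove the extension property simultaneously for every far-apart matching of size $t$ in $F^{>k'}$. That fails, because the statement is false with high probability. $F$ contains $\Theta(n)$ configurations of the following form: two clauses $(\ell_a\lor\ell_b),(\ell_c\lor\ell_d)$ of $F^{>k'}$, plus a chain $(\neg\ell_a\lor\ell_x),(\neg\ell_x\lor\ell_y),(\neg\ell_y\lor\neg\ell_c)$. These have six variables and five clauses, so their expected count is $\Theta(n^{6}\cdot n^{-5})=\Theta(n)$, and you can pick $t/2$ of them pairwise far apart. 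In the resulting far-apart matching, a uniform solution has $\ell_a=\ell_c=1$ with probability $4/9$ per pair. Hence at most a $(5/9)^{t/2}$ fraction of solutions extends.

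Every matching crosses the cut of some order, so no union bound over matchings or orders can help. Moreover, Theorem~\ref{theorem:saviour_thm} comes with no rate, and no exponentially small rate is possible: already $\Pr[F\notin\SAT]$ is only polynomially small for $\delta<1$. Step~2 is also loose. Arbitrary prefixes consistent with $\alpha$ may all reach the $0$-sink, so you must fix one prefix per $\alpha$. And requiring the right variable to be free at \emph{every} $i$ with $\alpha(\ell_i)=1$ is much stronger than a constant-fraction extension property.

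The missing idea is to prove extendability for a block of clauses that does not depend on the order, and then transfer it to whatever sub-matching the cut selects. The paper takes the first $k=n^{1/3}$ clauses.
\begin{itemize}
\item With high probability $F^{\le k}$ is itself a matching formula (Lemma~\ref{lemma:the_first_k_clauses_in_H_are_a_matching}).
\item Conditioned on this, it is uniform over $\MF_{n,k}$ and independent of $F^{>k}$. So its uniform random solution is distributed like a uniform random assignment to $2k$ variables. Theorem~\ref{theorem:saviour_thm} plus a Markov argument then gives $\theta(F^{\le k},F^{>k})\ge 2/3$ with high probability (Lemma~\ref{lemma:technical-lemma}).
\item The solutions of a matching formula form a product over its clauses. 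Hence $\theta(H,F)\ge 2/3$ holds for \emph{every} $H\subseteq F^{\le k}$, however $H$ depends on $F$ (Lemma~\ref{lemma:theta_for_subformula}). This requires the block to be small enough to be a matching formula, which your $\varepsilon n$-clause block is not.
\item $M(F)$ is invariant under permuting the clause sequence, and $\calH_2(n,\delta n)$ is exchangeable. So, given the clause multiset, the $\ge\gamma n$ clauses of $M(F)$ sit at uniformly random positions. A hypergeometric tail bound then puts $\Omega(k)$ of them in $F^{\le k}$ (Lemma~\ref{lemma:the_first_k_clauses_in_H_contain_a_big_PiF-matching}).
\end{itemize}
The node count then only needs $\theta\ge 2/3$. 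Greedily keep extendable solutions that pairwise differ as $(1,0)$ versus $(0,1)$ on some clause, where the swap argument with $\alpha^*_1\cup\beta^*_2$ applies (Lemmas~\ref{lemma:B_has_size_2_to_A}--\ref{lemma:intermediate_lemma_3}). This yields $2^{\Omega(n^{1/3})}$ nodes with no union bound at all.
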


\begin{proof}[Proof of Theorem~\ref{theorem:compilability_thresholds} (\ref{item:II})]
Combining Lemmas~\ref{lemma:from_H_to_F} and~\ref{lemma:target_lemma} with $\scrP$ the class of CNF formulas such that $F \in \scrP$ if and only if $\text{OBDD-size}(F) \leq 2^{|\var(F)|^v}$, we get $\lim_{n \to \infty} \Pr_{F \sim \calF_2(n,\delta n)}[F \in \scrP] = 0$, 
i.e., $\lim_{n \to \infty} \Pr_{F \sim \calF_2(n,\delta n)}
[\text{OBDD-size}(F) \leq 2^{|\var(F)|^v}] = 0$.
\end{proof}

The objective now is to prove Lemma~\ref{lemma:target_lemma}.

\subsection{High Treewidth in $\calH_2(n,\delta n)$}

Lemma~\ref{lemma:tw_threshold_for_2CNF} tells us that a random CNF $F$ almost always has high treewidth when drawn from $\calF_2(n,\delta n)$ with $\delta > 1/2$, but is it also true when $F$ is drawn from $\calH_2(n,\delta n)$? This question has to be addressed because high treewidth is an essential component in the proof of Lemma~\ref{lemma:target_lemma}. To prove that high treewidth indeed occurs almost always in $F \sim \calH_2(n,\delta n)$, we show that $F$ is unlikely to contain more than $\sqrt{n}$ duplicate clauses. This will mean that the primal graph of $F \sim \calH_2(n,\delta n)$ is essentially a graph $G \sim \calGr(n,\delta n)$ from which we remove at most $\sqrt{n}$ vertices. But then, since $G$ has treewidth $\Omega(n)$ with high probability (Theorem~\ref{theorem:tw_threshold_for_graphs_uniform}), we will have that the graph of $F$ has treewidth at least $\Omega(n) - \sqrt{n} = \Omega(n)$ with high probability. 

So let us show that we do not have too many duplicate clauses in $F = C_1 \land \cdots \land C_m \sim \calH_2(n,\delta n)$. For $k \in [m]$ fixed, the probability that the clause $C_k$ is identical to a clause in $C_1,\dots,C_{k-1},C_{k+1},\dots,C_m$ is $1 - \left(\frac{N-1}{N}\right)^{m-1}$, with $N= 4 \binom{n}{2} = 4n^2-4n$, so the expected number of non-unique clauses (the number of people with non-unique birthday) is 
\begin{equation}\label{eq:birthday}
m\left(1 - \left(\frac{N-1}{N}\right)^{m-1}\right) = \delta n \left(1 - \left(\frac{2n^2-2n-1}{2n^2-2n}\right)^{\delta n-1}\right)  \leq \delta n \left(1 - \left(1-\frac{1}{n^2}\right)^{\delta n}\right).
\end{equation}
Using the Markov bound, we then show that having more than $\sqrt{n}$ non-unique clauses is getting very unlikely as $n$ increases.
\begin{restatable}[$\star$]{lemma}{notTooManyNonUniqueClauses}\label{lemma:not-too-many-non-unique-clauses-in-H}
Let $\delta > 0$ and $F \sim \calH_2(n,\delta n)$. Then
$$\lim\limits_{n \rightarrow \infty} \Pr[F \text{ has at least } \sqrt{n} \text{ non-unique clauses}]  = 0.
$$
\end{restatable}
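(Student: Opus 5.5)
Let $Y$ denote the number of non-unique clauses of $F = C_1 \land \dots \land C_m \sim \calH_2(n,\delta n)$, i.e., the number of indices $k \in [m]$ such that $C_k = C_j$ for some $j \neq k$. The plan is a first-moment argument. Write $Y = \sum_{k=1}^m \mathbbm{1}[C_k \text{ is non-unique}]$ and use linearity of expectation. The expectation is exactly the left-hand side of (\ref{eq:birthday}): by symmetry and independence of the clauses, each $C_k$ is non-unique with probability $1 - ((N-1)/N)^{m-1}$.

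The key step is to bound the right-hand side of (\ref{eq:birthday}) by $O(1)$. By Bernoulli's inequality $(1-1/n^2)^{\delta n} \geq 1 - \delta n/n^2 = 1 - \delta/n$ for $n$ large enough that $\delta n \geq 1$, one gets
\[
\Ex[Y] \leq \delta n \cdot \frac{\delta}{n} = \delta^2 .
\]
One minor detail to check is the middle inequality of (\ref{eq:birthday}), namely that $(N-1)/N \geq 1 - 1/n^2$. This holds since $N = 4n^2 - 4n \geq n^2$ for $n \geq 2$. The fact that $\delta n$ need not be an integer is handled by rounding and does not affect the bound.

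Markov's inequality then gives
\[
\Pr[Y \geq \sqrt{n}] \leq \frac{\Ex[Y]}{\sqrt{n}} \leq \frac{\delta^2}{\sqrt{n}} \longrightarrow 0,
\]
which is the claim. There is no real obstacle: the expected number of duplicates is bounded by a constant, so the threshold $\sqrt{n}$ leaves ample room. Even the cruder bound $\Ex[Y] \leq m^2/N = O(1)$, obtained by a union bound over pairs, would suffice.
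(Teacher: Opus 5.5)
Your proof is correct and follows the paper's route: bound the expected number of non-unique clauses via (\ref{eq:birthday}) and apply Markov's inequality. Your Bernoulli step $(1-1/n^2)^{\delta n} \geq 1-\delta/n$ replaces the paper's Taylor expansion of $(1-x^2)^{\delta/x}$ and gives the explicit bound $\delta^2/\sqrt{n}$; the one slip, copied from the paper's text, is that $N = 4\binom{n}{2} = 2n^2-2n$ rather than $4n^2-4n$, but $N \geq n^2$ still holds for $n \geq 2$, so your check of the middle inequality goes through.
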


\noindent We can now prove that, when $\delta > 1/2$, the treewidth of $F$ is in $\Omega(n)$ with high probability using the argument explained above.

\begin{restatable}[$\star$]{lemma}{largeTWinH}\label{lemma:large-tw-in-H}
Let $\delta > 1/2$ and $F \sim \calH_2(n,\delta n)$. There is a constant $\gamma > 0$ such that 
$$
\lim\limits_{n \rightarrow \infty} \Pr[\tw(G_F) \geq \gamma n]  = 1.
$$
\end{restatable}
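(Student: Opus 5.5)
Our plan is to couple $F \sim \calH_2(n,\delta n)$ with a uniform random graph and then transfer the treewidth bound of Theorem~\ref{theorem:tw_threshold_for_graphs_uniform} via the monotonicity of treewidth under taking subgraphs. Write $F = C_1 \land \dots \land C_m$ with $m = \delta n$. Let $U \subseteq [m]$ be the set of indices $i$ such that $C_i$ is unique, i.e., not identical to any other $C_j$. By Lemma~\ref{lemma:not-too-many-non-unique-clauses-in-H}, with high probability $|[m]\setminus U| < \sqrt{n}$. Let $V_{bad}$ be the set of variables occurring in non-unique clauses; then $|V_{bad}| < 2\sqrt{n}$ with high probability.

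The key step is to produce a graph distributed as $\calGr(n,m')$ sitting inside $G_F$ up to the deletion of few vertices. We use the following cleaner route. Conditioned on the number $s$ of distinct clauses in $F$, the set of distinct clauses is a uniform $s$-subset of the $N$ possible clauses, by symmetry of the clause distribution under permutations of $Cl_{2,n}$. Hence the formula $F'$ made of the distinct clauses is distributed as $\calF_2(n,s)$, and $G_{F'} = G_F$. With high probability $s \geq m - \sqrt{n}$. Since $\delta > 1/2$, we fix $\delta'$ with $1/2 < \delta' < \delta$. Then, for $n$ large, $s \ge \delta' n$ with high probability.

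The next step is monotonicity. Take a uniformly random subset of $\delta' n$ clauses of $F'$. This subset is distributed as $\calF_2(n,\delta' n)$, and its primal graph is a subgraph of $G_F$. Lemma~\ref{lemma:tw_threshold_for_2CNF} with $\delta'$ gives a constant $c>0$ such that its primal treewidth exceeds $cn$ with high probability. Treewidth is monotone under subgraphs, so $\tw(G_F) \ge cn$ with high probability. We take $\gamma = c$.

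For the bookkeeping, we formally condition on the value of $s$. For each $s \ge \delta' n$, the failure probability is at most the failure probability for $\calF_2(n,\delta' n)$, which tends to $0$. Summing over $s$ and adding $\Pr[s < \delta' n] \to 0$ concludes the proof.

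We expect the main obstacle to be justifying rigorously that the set of distinct clauses, conditioned on its size, is uniform. This holds because the law of $(C_1,\dots,C_m)$ is invariant under any permutation of the clause universe, and such permutations act transitively on $s$-subsets. Lemma~\ref{lemma:not-too-many-non-unique-clauses-in-H} is what guarantees that $s$ is at least $\delta' n$. If one prefers to avoid subsampling, the $\calGr(n,m)$ coupling with the removal of $V_{bad}$ also works, because removing $2\sqrt{n}$ vertices lowers treewidth by at most $2\sqrt{n} = o(n)$.
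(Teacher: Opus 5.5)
Your proof is correct, but it runs the coupling in the opposite direction from the paper.

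\textbf{The paper's route.} It pads $G_F$ \emph{upward}: it adds $\delta n - |E(G_F)|$ uniformly random missing edges to get a supergraph $G \sim \calGr(n,\delta n)$, and applies Theorem~\ref{theorem:tw_threshold_for_graphs_uniform} to $G$. It then has to bound the number of added edges by $O(\sqrt{n})$. For this it combines Lemma~\ref{lemma:not-too-many-non-unique-clauses-in-H} with a recount, now for $\calH_2$, of primal edges of weight $\geq 2$. Finally it uses that deleting those edges lowers treewidth by only $O(\sqrt{n})$.

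\textbf{Your route.} You go \emph{downward}. The set of distinct clauses, conditioned on its size $s$, is a uniform $s$-subset; your justification via invariance of the i.i.d.\ clause law under permutations of the clause universe is right. So a uniform $\delta' n$-subset of it is exactly $\calF_2(n,\delta' n)$, and its primal graph is a subgraph of $G_F$. Plain monotonicity of treewidth then finishes.

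\textbf{Comparison.}
\begin{itemize}
\item You only need $s \geq \delta' n$, i.e.\ $o(n)$ duplicates, which is much weaker than the $\sqrt{n}$ bound.
\item You reuse Lemma~\ref{lemma:tw_threshold_for_2CNF} as a black box instead of redoing the edge-multiplicity count for $\calH_2$.
\item At the level of this lemma you never need any edge-deletion stability of treewidth.
\item The price is passing to a slightly smaller density $\delta' \in (1/2,\delta)$, so $\gamma$ is the constant for $\delta'$. That is harmless, since linear treewidth holds for every density above $1/2$.
\end{itemize}
The paper's assertion that its padded graph is distributed as $\calGr(n,\delta n)$ silently relies on the same kind of symmetry fact (the edge set of $G_F$ is uniform given its size), which you state and justify explicitly.

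Your closing aside, about deleting $V_{bad}$ from some $\calGr(n,m)$-distributed graph, is underspecified: you never say which such graph contains $G_F$ minus $V_{bad}$. The main argument does not need it, and neither the aside nor $V_{bad}$ is needed anywhere.
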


\section{From Large Treewidth to Large Matchings}\label{section:matching}

We recall that a matching in a graph $G$ is a subset $M \subseteq E(G)$ of its edges such that no two edges of $M$ share an endpoint. Let $V(M)$ be the set of endpoints in $M$. 
For $V_1,V_2$ two disjoint subsets of $V(G)$, a matching in $G$ between $V_1$ and $V_2$ is a matching of $G$ whose edges all have one endpoint in $V_1$ and the other in $V_2$. We denote by $mm_G(V_1,V_2)$ the maximum size of a matching between $V_1$ and $V_2$ in $G$.

A key ingredient for the proof of Theorem~\ref{theorem:compilability_thresholds} (\ref{item:II}) is the presence of a large matching in the primal graph $G_F$ that corresponds to a split in the variable order of $B(F)$. Suppose every path in $B(F)$ reads all variables and that we count the nodes at level $k$ of the OBDD. For a clause $a \lor b \in F$ with $a$ read before level $k$ and $b$ read after level $k$, any two satisfying assignments $\alpha$ and $\beta$ such that $\alpha(a) = 1$, $\alpha(b) = 0$ and $\beta(a) = 0$, $\beta(b) = 1$ have to reach different nodes at level $k$, for otherwise $B(F)$ would accept an assignment that falsifies $a \lor b$. To derive that there are many nodes at level $k$ using this idea, we need many disjoint clauses with one variable read on each side of level $k$; thus, we want a lage matching in $G_F$ between the set $V_1$ of variables read before level $k$ and the set $V_2$ of variable read after level $k$. In fact, any $k$ works as long as the corresponding partition in $G_F$ gives a large matching. To find this large matching, we use the connection between treewidth and maximum matching width.

\subsection{Large Matchings in $\calH_2(n,\delta n)$}

We follow the definitions from~\cite{Vatshelle2012}. A \emph{binary decomposition tree} $T$ of $G$ is a rooted binary tree $T$ whose leaves are in bijection with $V(G)$. For $t \in V(T)$ let $V_t \subseteq V(G)$ be the vertices of $G$ corresponding to the leaves below $t$ in $T$ and let $\bar V_t$ be the vertices of $G$ corresponding to the remaining leaves, i.e., $\bar V_t = V(G) \setminus V_t$. $(V_t,\bar V_t)$ is a bipartition of $V(G)$. 
\begin{definition}[Maximum Matching Width]
Let $G$ be a graph and $T$ be a binary decomposition tree of $G$. 
\begin{itemize}
\item $\displaystyle \mmw(G,T) := \max_{t \in V(T)} mm_G(V_t,\bar V_t)$.
\item $\displaystyle \mmw(G) := \min_T \mmw(G,T)$ with $T$ ranging over all binary decomposition trees of $G$.
\end{itemize}
We call $\mmw(G)$ the \emph{maximum  matching width} of $G$.
\end{definition}

\begin{theorem}[\cite{Vatshelle2012}, Theorem 4.2.5]\label{theorem:tw_mmw}
Let $G$ be a graph, then $\frac{1}{3}\tw(G) \leq \mmw(G) \leq \tw(G)+1$.
\end{theorem}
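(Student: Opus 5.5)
The two inequalities are proved separately: a tree decomposition is turned into a binary decomposition tree for the upper bound, and a binary decomposition tree is turned into a tree decomposition for the lower bound. In both directions the bridge is the same observation. If a set $S$ of vertices covers every edge of $G$ between $V_t$ and $\bar V_t$, then any matching between $V_t$ and $\bar V_t$ has size at most $|S|$, because its edges are pairwise disjoint and each one uses a vertex of $S$. Conversely, by K\H{o}nig's theorem applied to the bipartite graph of crossing edges, $mm_G(V_t,\bar V_t)\le k$ yields such a cover $S_t$ with $|S_t|\le k$.

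\emph{Upper bound $\mmw(G)\le \tw(G)+1$.} We take a rooted tree decomposition $(D,\{B_d\})$ of width $\tw(G)$. We make it binary by duplicating nodes, which preserves the bags. For each vertex $v$, let $top(v)$ be the node of $D$ closest to the root whose bag contains $v$. We attach a leaf for $v$ below $top(v)$ and then prune and contract to obtain a binary decomposition tree $T$. The nodes of $T$ correspond to nodes $d$ of $D$, and $V_t=\{v : top(v)\text{ lies in the subtree of } d\}$. Now let $uv$ be an edge with $u\in V_t$ and $v\notin V_t$.
\begin{itemize}
\item Every bag containing $u$ lies in the subtree of $d$, so the bag covering $uv$ lies there too.
\item The bags containing $v$ form a connected subtree that reaches $top(v)$, which is outside the subtree of $d$.
\item Hence this connected subtree passes through $d$, so $v\in B_d$.
\end{itemize}
Thus $B_d$ covers all crossing edges, and the observation above gives $mm_G(V_t,\bar V_t)\le |B_d|\le \tw(G)+1$.

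\emph{Lower bound $\tw(G)\le 3\,\mmw(G)$.} We fix $T$ with $\mmw(G,T)=k$. Each edge $e=(t,\mathrm{parent}(t))$ of $T$ induces the cut $(V_t,\bar V_t)$. By K\H{o}nig, we choose an inclusion-minimal vertex cover $S_e$ of the crossing edges with $|S_e|\le k$. We build a tree decomposition on the tree $T$ itself, after suppressing the root so that every internal node has degree $3$. The bag of a node $t$ is the union of $S_e$ over the at most three edges $e$ incident to $t$, plus the vertex $v$ itself if $t$ is the leaf of $v$. Bag sizes are then at most $3k$, with leaf bags handled by the same count. Edge coverage is checked as follows. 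For an edge $uv$ of $G$, the path in $T$ between the leaves $\ell_u$ and $\ell_v$ has every edge crossing $uv$, so $u$ or $v$ lies in each $S_e$ along it. Looking at the first edge $e$ on this path whose cover contains $v$, the node between that edge and the previous one should contain both $u$ and $v$, given the connectivity property below.

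\emph{Main obstacle: connectivity.} The hardest step is showing that, for each vertex $v$, the nodes whose bag contains $v$ form a connected subtree. The set $\{e : v\in S_e\}$ need not be connected in $T$ a priori. My first attempt uses minimality of the covers. If $v\in S_e$, then $v$ has a neighbour $w$ on the far side of $e$. For every tree edge $f$ on the path from $\ell_v$ to $e$, the edge $vw$ crosses $f$, so $v$ or $w$ lies in $S_f$. If this is not enough, I would re-choose the covers canonically, for example by taking the covers closest to each leaf, so that $v\in S_e$ forces $v\in S_f$ for all $f$ between $\ell_v$ and $e$. Alternatively I would enlarge bags along these paths and recount, charging each added vertex to one of the three incident cuts. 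Either way, the goal is to keep every bag within the union of three covers of size at most $k$, giving width at most $3k-1$ and hence $\frac13\tw(G)\le\mmw(G)$.
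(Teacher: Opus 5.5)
The paper gives no proof of this statement; it imports it from Vatshelle's thesis, so I judge your argument on its own. Your upper bound $\mathrm{mmw}(G)\le\mathrm{tw}(G)+1$ is correct. When several leaves hang below the same node $d$, replace $d$ by a chain of copies that all carry the bag $B_d$; your crossing argument still gives $v\in B_d$.

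\textbf{The gap is in the lower bound, exactly where you flag it, and your first idea cannot close it.} With arbitrary minimum (hence inclusion-minimal) covers, the bags $\bigcup_{e\ni t}S_e$ need not be connected. Take $G$ with a single edge $vw$ plus isolated vertices. Let $T$ be a caterpillar with spine $a_1a_2a_3a_4$, with $\ell_v$ at $a_1$ and $\ell_w$ at $a_4$. Every crossing graph is $\{vw\}$ or empty. Choose $\{v\}$ on $\ell_va_1$ and on $a_3a_4$, and $\{w\}$ on $a_1a_2$ and on $a_2a_3$. All of these are minimum covers, yet $v$ lies in the bags of $a_1$ and $a_3$ but not in that of $a_2$. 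Your two fallback ideas are where the real work lies, and neither is carried out. It is not clear that a ``closest to each leaf'' choice can be made simultaneously for all vertices. Nor is it clear that enlarging bags along paths keeps each incident cut's contribution at most $k$.

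\textbf{The missing ingredient is an uncrossing lemma for minimum vertex covers.} For distinct tree edges $e,f$, let $A$ be the vertices whose leaves lie on the side of $e$ away from $f$. Let $B$ be those on the side of $f$ away from $e$, and $D$ the rest. The crossing graphs are then $H_e=G[A,D\cup B]$ and $H_f=G[A\cup D,B]$. Let $X,Y$ be minimum covers of $H_e,H_f$, and set $X'=((X\cup Y)\cap A)\cup(X\cap D)\cup(X\cap Y\cap B)$ and $Y'=(X\cap Y\cap A)\cup(Y\cap D)\cup((X\cup Y)\cap B)$. A short case check shows $X'$ covers $H_e$ and $Y'$ covers $H_f$. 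Since $|X'|+|Y'|=|X|+|Y|$, both are minimum.

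You can then finish in either of two ways.
\begin{itemize}
\item[(a)] Among all families of minimum covers, pick one minimizing $\sum_e\sum_{x\in C_e}\mathrm{dist}_T(\ell_x,e)$. Suppose a pair violates ``$x\in C_f\Rightarrow x\in C_e$ for every $e$ on the path from $\ell_x$ to $f$''. Uncrossing that pair strictly decreases the sum. So the minimizer satisfies this property, which is precisely connectivity of your bags, and bag sizes stay at most $3k$.
\item[(b)] Keep arbitrary minimum covers. Put $x$ into every node of the hull $T_x$ of $\ell_x$ and the edges whose covers contain $x$; this is connected by construction. To bound bags, iterate the lemma: all $x$ on the far side of an edge $e$ at $t$ with $t\in T_x$ lie in a single minimum cover of $H_e$, so there are at most $k$ of them per incident edge.
\end{itemize}
Either route needs \emph{minimum} covers. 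Inclusion-minimal covers of size at most $k$ are not preserved by the exchange.
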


\noindent Combining Lemma~\ref{lemma:large-tw-in-H} and Theorem~\ref{theorem:tw_mmw} then yields the following.

\begin{restatable}[$\star$]{lemma}{largeMimwInH}\label{lemma:large-mimw-in-H}
Let $\delta > 1/2$ and $F \sim \calH_2(n,\delta n)$. There is a constant $\gamma > 0$ such that 
$$
\lim\limits_{n \rightarrow \infty} \Pr[\mmw(G_F) \geq \gamma n]  = 1.
$$
\end{restatable}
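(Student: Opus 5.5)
The plan is to combine Lemma~\ref{lemma:large-tw-in-H} with the lower inequality of Theorem~\ref{theorem:tw_mmw}. Let $\gamma_0 > 0$ be the constant given by Lemma~\ref{lemma:large-tw-in-H} for our $\delta > 1/2$, and set $\gamma := \gamma_0/3$. Consider the event $E_n$ that $\tw(G_F) \geq \gamma_0 n$. By Lemma~\ref{lemma:large-tw-in-H}, $\Pr[E_n] \to 1$ as $n \to \infty$.

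On $E_n$, we apply Theorem~\ref{theorem:tw_mmw} to the deterministic graph $G_F$. This gives $\mmw(G_F) \geq \frac{1}{3}\tw(G_F) \geq \frac{\gamma_0}{3} n = \gamma n$. Hence $E_n$ is contained in the event $\{\mmw(G_F) \geq \gamma n\}$. Monotonicity of probability then yields $\Pr[\mmw(G_F) \geq \gamma n] \geq \Pr[E_n] \to 1$, which is the claim.

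One point deserves a brief check. $G_F$ is defined to include the isolated vertices $X_n \setminus \var(F)$, so it has exactly $n$ vertices. This is consistent with how Lemma~\ref{lemma:large-tw-in-H} is stated, and Theorem~\ref{theorem:tw_mmw} applies to any graph, with or without isolated vertices. Duplicate clauses in $\calH_2$ do not matter either, because $G_F$ is a simple graph by definition.

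I do not expect any real obstacle. All the probabilistic work is already contained in Lemma~\ref{lemma:large-tw-in-H}, and what remains is a deterministic implication applied pointwise on a high-probability event.
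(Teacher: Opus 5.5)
Your proposal is correct and matches the paper's proof: the paper also derives this lemma by combining Lemma~\ref{lemma:large-tw-in-H} with the inequality $\frac{1}{3}\tw(G) \leq \mmw(G)$ of Theorem~\ref{theorem:tw_mmw}. Taking $\gamma = \gamma_0/3$ and applying the inequality pointwise on the high-probability event is exactly the intended argument.
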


For $F \sim \calH_2(n,\delta n)$, let $B(F)$ be a minimal-size OBDD representing $F$, breaking ties arbitrarily. Let $\pi(F) : (x_{\pi(1)},x_{\pi(2)},\dots,x_{\pi(n)})$ be the variable order used by this OBDD. Variable orders are in one-to-one correspondence with binary tree decompositions in which the tree is right-linear, i.e., every internal node has a leaf as its left child. Let $T(F)$ be the tree decomposition for $\pi(F)$. The picture below provides an example for $n = 5$.
\begin{center}
\begin{tikzpicture}[xscale=2,yscale=2]
\node[draw,circle,inner sep=2pt] (a) at (0,0) {};
\node[draw,circle,inner sep=2pt] (b) at (0.5,-0.25) {};
\node[draw,circle,inner sep=2pt,label={north east:$t$}] (c) at (1,-0.5) {};
\node[draw,circle,inner sep=2pt] (d) at (1.5,-0.75) {};

\node[font=\footnotesize,inner sep=2pt] (x1) at (-0.5,-0.25)  {$x_{\pi(1)}$};
\node[font=\footnotesize,inner sep=2pt] (x2) at (0,-0.5) {$x_{\pi(2)}$};
\node[font=\footnotesize,inner sep=2pt] (x3) at (0.5,-0.75) {$x_{\pi(3)}$};
\node[font=\footnotesize,inner sep=2pt] (x4) at (1,-1) {$x_{\pi(4)}$};
\node[font=\footnotesize,inner sep=2pt] (x5) at (2,-1) {$x_{\pi(5)}$};

\draw (a) -- (b) -- (c) -- (d);
\draw (a) -- (x1);
\draw (b) -- (x2);
\draw (c) -- (x3);
\draw (d) -- (x4);
\draw (d) -- (x5);
\end{tikzpicture}
\end{center}
If $t$ is a leaf of $T(F)$ then $|V_t| = 1$ and the maximum matching between $V_t$ and $\bar V_t$ has size $0$ or $1$. So, generally, $mmw(G_F,T(F))$ is obtained for some internal node of $T(F)$. Selecting an internal node $t$ in $T(F)$ gives a bipartition $(\bar V_t, V_t)$ that amounts to cutting $\pi(F)$. For instance, in the above example, $(\bar V_t, V_t) = (\{x_{\pi(1)},x_{\pi(2)}\},\{x_{\pi(3)},x_{\pi(4)},x_{\pi(5)}\})$. We have a partition of the variables $\Pi(F) = (\Pi_1(F),\Pi_2(F))$ obtained by splitting $\pi(F)$ such that there is a matching $M$ of size $|M| = \mmw(G_F,T(F))$ between $\Pi_1(F)$ and $\Pi_2(F)$ in $G_F$ (breaking ties arbitrarily for $\Pi(F)$). Each edge in $M$ corresponds to at least one clause in $F$, so we let $M(F)$ be a subformula of $F$ obtained by taking one such clause per edge in $M$ (again, breaking ties arbitrarily). Note that $|M(F)| = |M|$ and $G_{M(F)} = M$. We shall keep in mind that the variables $B(F)$, $\pi(F)$, $\Pi(F)$, $T(F)$, $M(F)$ are all defined \emph{deterministically} from $F$. We technically also need that these variables are equal for any two formulas $F$ and $F'$ that differ only up to permutation of the clauses. The next corollary follows from Lemma~\ref{lemma:large-mimw-in-H}.
 
\begin{restatable}[$\star$]{corollary}{largeMatchingFormula}\label{corollary:large-matching}
Let $\delta > 1/2$ and $F \sim \calH_2(n,\delta n)$. There is a constant $\gamma > 0$ such that 
$$
\lim\limits_{n \rightarrow \infty} \Pr\left[|M(F)| \geq \gamma n\right] = 1.
$$
\end{restatable}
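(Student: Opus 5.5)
The plan is to reduce the corollary directly to Lemma~\ref{lemma:large-mimw-in-H}, using the deterministic construction of $M(F)$ described just above. Take $\gamma > 0$ to be the constant given by Lemma~\ref{lemma:large-mimw-in-H}. It then suffices to prove the deterministic inequality
$$|M(F)| \;=\; \mmw(G_F,T(F)) \;\geq\; \mmw(G_F)$$
for every formula $F$. Once this holds, the event $\mmw(G_F) \geq \gamma n$ is contained in the event $|M(F)| \geq \gamma n$. Its probability therefore tends to $1$ as well.

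For the equality, I would check the construction. $T(F)$ is the right-linear binary decomposition tree of $G_F$ induced by $\pi(F)$. Its leaves are in bijection with $X_n = V(G_F)$; this uses the fact that $G_F$ contains the missing variables as isolated vertices, so the tree has exactly $n$ leaves. By definition, $\mmw(G_F,T(F)) = \max_t mm_{G_F}(V_t,\bar V_t)$.

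There is one small case to handle. If the maximum is attained only at a leaf $t$, its value is at most $1$. Then either $\mmw(G_F,T(F)) \leq 1$, which is irrelevant once $\gamma n > 1$, or the matching consists of a single edge $\{u,v\}$ with $u = V_t$. In the latter case, cutting $\pi(F)$ right after $u$ or right before $u$ separates $u$ from $v$ and still gives a matching of size $1$. So without loss of generality the maximum is realized by a cut $\Pi(F)$ of the variable order. The matching $M$ then has size $\mmw(G_F,T(F))$.

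Each edge of $M$ comes from at least one clause of $F$, and $M(F)$ selects exactly one clause per edge. Since the edges of $M$ are distinct, so are the selected clauses, even if $F$ has duplicates. Hence $|M(F)| = |M|$.

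The inequality $\mmw(G_F,T(F)) \geq \mmw(G_F)$ is immediate, because $\mmw(G_F)$ is a minimum over all binary decomposition trees and $T(F)$ is one of them.

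There is no real obstacle here. The only care needed is bookkeeping: isolated vertices must be included so that $T(F)$ is a valid decomposition tree of $G_F$, the leaf versus internal node case must be handled, and one must note that $M(F)$ is a measurable, deterministic function of $F$, so the probability statement makes sense. Finally, $\Pr[|M(F)| \geq \gamma n] \geq \Pr[\mmw(G_F) \geq \gamma n] \to 1$.
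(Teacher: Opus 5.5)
Your proposal is correct and follows essentially the same route as the paper: both use that $T(F)$ is one particular binary decomposition tree of $G_F$, so the largest matching across a cut of $\pi(F)$ is bounded below in terms of $\mmw(G_F)$, and then both invoke Lemma~\ref{lemma:large-mimw-in-H}. The only difference is the bookkeeping for leaves: the paper absorbs them with an additive loss, obtaining $|M(F)| \geq \mmw(G_F)-1$ and implicitly shrinking $\gamma$, whereas you observe that a maximum exceeding $1$ must be attained at an internal node, which gives $|M(F)| \geq \mmw(G_F)$ directly on the relevant event.
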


\subsection{Matching on the First $n^{1/3}$ Clauses}

Now we have $M(F)$, our large matching subformula of $F$. Here we explain that, since $|M(F)|$ is in $\Omega(n)$ with high probability, that is, $M(F)$ makes a constant fraction of all $\delta n$ clauses, we can expect a good fraction of the \emph{first clauses} to be in $M(F)$ (recall that the clauses of $F \sim \calH_2(n,\delta n)$ are ordered). We consider the first $n^{1/3}$ clauses.
\begin{definition}[Matching formula]
Given a variable bipartition $\Pi = (\Pi_1,\Pi_2)$, a \emph{$\Pi$-matching formula} is a $2$-CNF $\bigwedge_{i = 1}^h (\ell_i \lor \ell_{h+i})$ such that $\var(\ell_i) \neq \var(\ell_j)$ for every $i \neq j$ in $[2h]$ and such that $\var(\ell_i) \in \Pi_1$ and $\var(\ell_{h+i}) \in \Pi_2$ for all $i \in [h]$.
\end{definition}
Alternatively, a matching formula is a 2-CNF with $m$ clauses and a primal graph that is composed of $m$ pairwise disjoint edges. For instance, $(x_1 \lor \neg x_3) \land (x_2 \lor x_5) \land (\neg x_4 \lor \neg x_6)$ is a $(\{x_1,x_2,x_4\},\{x_3,x_5,x_6\})$-matching formula with primal graph 

\begin{center}\begin{tikzpicture}[scale=1, rotate=90]
\node (a) at (0,0) {$x_1$};
\node (b) at (1,0) {$x_3$};
\node (c) at (0,-1) {$x_2$};
\node (d) at (1,-1) {$x_5$};
\node (e) at (0,-2) {$x_4$};
\node (f) at (1,-2) {$x_6$};
\draw (a) -- (b);
\draw (c) -- (d);
\draw (e) -- (f);
\end{tikzpicture}
\end{center}
On the other hand, $(x_1 \lor x_3) \land (x_1 \lor \neg x_3) \land (x_2 \lor x_5) \land (\neg x_4 \lor \neg x_6)$  is not a matching formula, although it has the same primal graph. 

Let $n \geq 2k$.  We denote by $\MF_{n,k}$ the set of matching formulas with exactly $k$ clauses that use only variables in $x_1,\dots,x_n$. For instance, $(x_1 \lor \neg x_3) \land (x_2 \lor x_5) \land (\neg x_4 \lor \neg x_6)$ belongs to $\MF_{6,3}$, but also to $\MF_{7,3}$, $\MF_{8,3}$, etc. Counting the solutions to a matching formula is straightforward: if $F \in \MF_{n,k}$, then $|\var(F)| = 2k$ and $|\sat(F)| = 3^k$.

We now split $F \sim \calH_2(n,\delta n)$ into $F^{\leq k}$ and $F^{> k}$ with
$
k = n^{1/3}
$
and consider the probability that $F^{\leq k}$ is a matching formula and that $M(F)$ intersects $F^{\leq k}$ on a large section. 

\begin{restatable}[$\star$]{lemma}{firstKclausesAreMatching}\label{lemma:the_first_k_clauses_in_H_are_a_matching}
Let $\delta > 0$, $F \sim \calH_2(n,\delta n)$ and $k = o(\sqrt{n})$. Then 
$$
\lim\limits_{n \rightarrow \infty} \Pr\left[F^{\leq k} \in \MF_{n,k}\right] = 1.
$$
\end{restatable}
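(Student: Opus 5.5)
The plan is a union bound over pairs of clauses among the first $k$, exploiting the independence of the clauses under $\calH_2(n,\delta n)$. The formula $F^{\leq k} = C_1 \land \dots \land C_k$ fails to lie in $\MF_{n,k}$ exactly when some variable occurs in two distinct clauses $C_i, C_j$ with $i < j \leq k$. Each individual clause has two distinct variables by definition. If the variable sets of the $k$ clauses are pairwise disjoint, then $F^{\leq k}$ is a matching formula for a suitable bipartition: orient each clause arbitrarily, putting one literal's variable in $\Pi_1$ and the other's in $\Pi_2$. Note that $\MF_{n,k}$ as defined only requires the existence of such a structure with $2k$ distinct variables, which matches the alternative description given just after the definition. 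So it suffices to show that
$$\Pr\left[\exists\, i<j\leq k,\ \var(C_i)\cap\var(C_j)\neq\emptyset\right] \to 0.$$

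For a fixed pair $i<j$, condition on $C_i$, whose variable set is some fixed pair $\{x_a,x_b\}$. The variable set of $C_j$ is a uniformly random pair among the $\binom{n}{2}$ pairs, since each pair corresponds to exactly $4$ clauses. The number of pairs meeting $\{x_a,x_b\}$ is $\binom{n}{2}-\binom{n-2}{2} = 2n-3$. Hence
$$\Pr\left[\var(C_i)\cap\var(C_j)\neq\emptyset\right] = \frac{2n-3}{\binom{n}{2}} \leq \frac{4}{n}.$$
This uses only that $C_i$ and $C_j$ are independent and uniform, which holds in $\calH_2(n,m)$. Identical clauses are also covered by this event, so duplicates need no separate treatment.

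The union bound over the $\binom{k}{2}\leq k^2/2$ pairs then gives a failure probability of at most $2k^2/n$. Since $k=o(\sqrt n)$, this tends to $0$, which proves the lemma. There is no real obstacle. The only points needing care are, first, checking that the definition of $\MF_{n,k}$ is met by any $k$ clauses with pairwise disjoint variable sets (via the arbitrary orientation above), and second, that $k\leq \delta n$, so $F^{\leq k}$ is well defined for large $n$.
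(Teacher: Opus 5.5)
Your proof is correct and is essentially the paper's argument. The paper bounds the expected number of variable-sharing pairs among the first $k$ clauses by $\binom{k}{2}\cdot\frac{4}{n}$ and applies Markov's inequality, which is your union bound. Its count of $8(n-1)-4 = 4(2n-3)$ clauses sharing a variable with a fixed clause agrees with your count of $2n-3$ variable pairs.
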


Let $M(F,k)$ be the intersection of $M(F)$ with $F^{\leq k}$. For a fixed $F$, we have $M(F) = M(F')$ for every $F'$ obtained by randomly permuting the clauses of $F$. We claim that, when $|M(F)| \geq \Omega(n)$, a constant fraction of $(F')^{\leq k}$ is likely to intersect $M(F)$. Since $F'$ and $F$ are equally likely to be drawn from $\calH_2(n,\delta n)$, we can then show the following lemma.

\begin{restatable}[$\star$]{lemma}{matchingOnFirstKclauses}\label{lemma:the_first_k_clauses_in_H_contain_a_big_PiF-matching}
Let $\delta > 1/2$, $F \sim \calH_2(n,\delta n)$ and $k = n^{1/3}$  There is a constant $\gamma > 0$ such that 
$$
\lim\limits_{n \rightarrow \infty} \Pr\left[|M(F,k)| \geq  \gamma k\right] = 1.
$$
\end{restatable}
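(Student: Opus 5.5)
We will exploit the exchangeability of the clauses of $F = C_1 \land \dots \land C_m \sim \calH_2(n,\delta n)$, with $m = \delta n$. Let $\sigma$ be a uniformly random permutation of $[m]$, independent of $F$, and let $F^\sigma = C_{\sigma(1)} \land \dots \land C_{\sigma(m)}$. Since the clauses are i.i.d., $F^\sigma$ has the same distribution as $F$. Moreover, $M(F^\sigma) = M(F)$ as a set of clauses, because $B$, $\pi$, $\Pi$, $T$ and $M$ are invariant under permuting clauses. Hence $|M(F,k)|$ has the same law as $|M(F^\sigma,k)|$, which counts the positions $i \leq k$ with $C_{\sigma(i)} \in M(F)$. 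One small point is that $M(F)$ should be viewed as a set of indices of clauses, so that duplicates are handled consistently. We fix this convention, taking the chosen index tie-break to be permutation-equivariant, which is what the invariance assumption in the paper provides.

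Condition on $F$ with $|M(F)| \geq \gamma_0 n$, where $\gamma_0$ comes from Corollary~\ref{corollary:large-matching}. The set $\{\sigma(1),\dots,\sigma(k)\}$ is then a uniform $k$-subset of $[m]$. So $X := |M(F^\sigma,k)|$ is hypergeometric: $k$ draws without replacement from a population of $m = \delta n$ indices, of which $|M(F)| \geq \gamma_0 n$ are marked. Its mean is at least $k\gamma_0/\delta$. By the standard Chernoff bound for the hypergeometric distribution (which is dominated by the binomial case, Hoeffding), we get
$$\Pr\bigl[X < \tfrac{\gamma_0}{2\delta} k \bigm| F\bigr] \leq \exp\bigl(-\Omega(k)\bigr) = \exp\bigl(-\Omega(n^{1/3})\bigr).$$
Alternatively, Chebyshev suffices, since the variance is at most $k$ and the mean is of order $k \to \infty$.

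Set $\gamma = \gamma_0/(2\delta)$. Then
$$\Pr[|M(F,k)| < \gamma k] = \Pr[|M(F^\sigma,k)| < \gamma k] \leq \Pr[|M(F)| < \gamma_0 n] + \exp\bigl(-\Omega(n^{1/3})\bigr).$$
Both terms tend to $0$, the first by Corollary~\ref{corollary:large-matching}.

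The main obstacle is rigor in the exchangeability step rather than the concentration estimate. We must check that the joint law of $(F^\sigma, M(F^\sigma))$ equals that of $(F, M(F))$, and that $M$ commutes with relabelling clause positions. Concretely, we need $M(F^\sigma) = \{\sigma^{-1}(j) : j \in M(F)\}$ as index sets. This follows from $M$ being defined deterministically and permutation-invariantly. If needed, we enforce it by defining tie-breaks through a fixed ordering on clause values, with positional tie-breaking among identical duplicates done equivariantly. Once this holds, the hypergeometric argument is routine. Non-integrality of $k = n^{1/3}$ is handled by rounding.
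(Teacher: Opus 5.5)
This is essentially the paper's proof: you condition on a clause multiset with $|M(F)| \geq \gamma_0 n$, randomize the clause order by a uniform permutation (which preserves the law of $F$ and the clause set $M(F)$), and apply a hypergeometric tail bound at threshold $\gamma_0 k/(2\delta)$. One correction on duplicates: exact index-equivariance $M(F^\sigma)=\{\sigma^{-1}(j) : j\in M(F)\}$ cannot hold for any deterministic choice of a single representative among identical clauses, since swapping two equal clauses fixes $F$ but would have to move the chosen index. You do not need it, though---fix a representative index set $J$ for $M(F)$ in $F$ and observe that the number of distinct clauses of $M(F)$ among the first $k$ clauses of $F^\sigma$ is at least $|\{i\leq k : \sigma(i)\in J\}|$, which is exactly the paper's remark that duplicates can only help.
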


\section{Lower Bounds on the OBDD-Size}\label{section:bound}

In this section, $F$ is a fixed 2-CNF formula. Recall that we have a fixed OBDD $B(F)$ whose size we want to bound from below and that the variable bipartition $\Pi(F) = (\Pi_1(F),\Pi_2(F))$ is obtained by splitting the variable order of $B(F)$.

\begin{lemma}\label{lemma:B_has_size_2_to_A}
Let $F$ be a 2-CNF and $H = \bigwedge_{i = 1}^{h} (\ell_i \lor \ell_{h+i})
$ be a $\Pi(F)$-matching subformula of  $F$. Let $\calA$ be a subset of $\sat(H)$ such that for all $\alpha \in \calA$, we have $F \land \alpha \in \SAT$ and such that, for all $\alpha \neq \beta$ in $\calA$, there is $i \in [h]$ such that $\alpha(\ell_i) \neq \beta(\ell_i)$ and $\alpha(\ell_{h+i}) \neq \beta(\ell_{h+i})$. Then $B(F)$ has at least $|\calA|$ nodes.
\end{lemma}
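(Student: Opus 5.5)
We argue by a fooling-set style argument at the cut of the variable order of $B(F)$ that defines $\Pi(F)=(\Pi_1(F),\Pi_2(F))$. For each $\alpha\in\calA$, since $F\land\alpha\in\SAT$, fix a full satisfying assignment $\hat\alpha$ of $F$ extending $\alpha$. Follow the path of $\hat\alpha$ in $B(F)$ from the source. Let $v(\alpha)$ be the first node on this path labeled by a variable of $\Pi_2(F)$; if there is no such node, let $v(\alpha)$ be the $1$-sink that the path reaches. Along the path, only variables of $\Pi_1(F)$ are tested before $v(\alpha)$, because the order puts all of $\Pi_1(F)$ before $\Pi_2(F)$. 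The plan is to show that $\alpha\mapsto v(\alpha)$ is injective on $\calA$, which immediately gives at least $|\calA|$ nodes.

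Suppose $\alpha\neq\beta$ in $\calA$ with $v(\alpha)=v(\beta)=v$. By hypothesis there is $i\in[h]$ with $\alpha(\ell_i)\neq\beta(\ell_i)$ and $\alpha(\ell_{h+i})\neq\beta(\ell_{h+i})$. Both $\alpha$ and $\beta$ satisfy $\ell_i\lor\ell_{h+i}$, so their values on $(\ell_i,\ell_{h+i})$ lie in $\{(1,0),(0,1),(1,1)\}$. Two such pairs differing in both coordinates must be $(1,0)$ and $(0,1)$. Without loss of generality, $\alpha(\ell_i)=0,\alpha(\ell_{h+i})=1$ and $\beta(\ell_i)=1,\beta(\ell_{h+i})=0$.

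Now build the hybrid assignment $\gamma$ that agrees with $\hat\alpha$ on $\Pi_1(F)$ and with $\hat\beta$ on $\Pi_2(F)$. Its path in $B(F)$ first follows the path of $\hat\alpha$, since they agree on all $\Pi_1(F)$ variables, and so reaches $v$. From $v$ onward only $\Pi_2(F)$ variables are read, where $\gamma$ agrees with $\hat\beta$. Hence $\gamma$ continues along $\hat\beta$'s path from $v$ and reaches the $1$-sink, since $\hat\beta$ satisfies $F$. If $v$ is itself the $1$-sink, $\gamma$ is accepted directly. So $B(F)(\gamma)=1$.

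On the other hand, $\var(\ell_i)\in\Pi_1(F)$ and $\var(\ell_{h+i})\in\Pi_2(F)$, so $\gamma(\ell_i)=\alpha(\ell_i)=0$ and $\gamma(\ell_{h+i})=\beta(\ell_{h+i})=0$. Thus $\gamma$ falsifies the clause $\ell_i\lor\ell_{h+i}$ of $H$, which is a subformula of $F$. This contradicts that $B(F)$ computes $F$, so the map $\alpha\mapsto v(\alpha)$ is injective.

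The only delicate point is the definition of $v(\alpha)$: paths may skip variables, so there is no literal "level $k$". Taking the first $\Pi_2(F)$-labeled node, or the sink, handles this, and the order guarantees that the prefix and suffix decompose cleanly. A minor subtlety is that variables of $F$ may not be tested at all; since $\gamma$ is a full assignment to $\var(F)$, this causes no issue.
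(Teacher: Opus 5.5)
Your proof is correct and follows essentially the same route as the paper. Like the paper, you extend each $\alpha$ to a satisfying assignment of $F$. You then show that two distinct elements of $\calA$ cannot reach the same node at the $\Pi_1(F)/\Pi_2(F)$ cut, because crossing the $\Pi_1(F)$-part of one with the $\Pi_2(F)$-part of the other is accepted by $B(F)$ yet falsifies $\ell_i \lor \ell_{h+i}$. Your explicit definition of $v(\alpha)$ as the first $\Pi_2(F)$-labeled node (or the $1$-sink) makes precise what the paper informally calls the node reached by $\alpha^*_1$.
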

\begin{proof}
For every $\alpha \in \calA$, there is an extension of $\alpha$ that satisfies $F$; we call the extended assignment $\alpha^*$. We denote by $\alpha^*_1$ the restriction of $\alpha^*$ to $\Pi_1(F)$ and by $\alpha^*_2$ the restriction of $\alpha^*$ to $\Pi_2(F)$. The variables $\var(\ell_1),\dots,\var(\ell_h)$ are assigned by $\alpha^*_1$ whereas the variables $\var(\ell_{h+1}),\dots,\var(\ell_{2h})$ are assigned by $\alpha^*_2$. 

Consider $\alpha,\beta \in \calA$ with $\alpha \neq \beta$. The assignments $\alpha^*$ and $\beta^*$ are in $\sat(F)$, so their corresponding paths in $B(F)$ reach the $1$-sink. Suppose the paths for $\alpha^*_1$ and $\beta^*_1$ reach the same node in $B(F)$, then the paths for $\alpha^*_1 \cup \beta^*_2$ and $\beta^*_1 \cup \alpha^*_2$ also reach the $1$-sink in $B(F)$ and therefore $\alpha^*_1 \cup \beta^*_2 \in \sat(F)$ and $\beta^*_1 \cup \alpha^*_2 \in \sat(F)$. But there exists $i \in [h]$ with $\alpha(\ell_i) \neq \beta(\ell_i)$ and $\alpha(\ell_{h+i}) \neq \beta(\ell_{h+i})$, which implies that $\alpha(\ell_{i}) = \beta(\ell_{h+i}) = 0$ or $\beta(\ell_{i}) = \alpha(\ell_{h+i}) = 0$, which in turn implies that $\alpha^*_1(\ell_i) = \beta^*_2(\ell_{h+i}) = 0$ or $\beta^*_1(\ell_i) = \alpha^*_2(\ell_{h+i}) = 0$. Thus, either $\alpha^*_1 \cup \beta^*_2$ or $\beta^*_1 \cup \alpha^*_2$ falsifies $\ell_i \lor \ell_{i+h}$ and therefore falsifies $F$. This is a contradiction, so for every $\alpha,\beta \in \calA$ with $\alpha \neq \beta$, we find that $\alpha^*_1$ and $\beta^*_1$ do not reach the same node in $B(F)$. It follows that $B(F)$ has at least $|\calA|$ nodes.
\end{proof}

For two formulas $H$ and $F$, we let $\theta(H,F)$ be the fraction of assignments $\alpha\in \sat(H)$ that have an extension satisfying $F$. Formally,
$$
\theta(H,F) = \frac{|\{\alpha \in \sat(H) : \alpha \land F \in \SAT\}|}{|\sat(H)|}.
$$
If $H$ is unsatisfiable, then we define $\theta(H,F) = 1$. Note that $\alpha$ is an assignment over the variables of $H$. In particular, $\theta(H,F) \geq 2/3$ means that at least $2/3^\text{rd}$ of the satisfying assignments of $H$ have an extension that satisfies $F$. If $H$ is satisfiable and $F$ is unsatisfiable, then $\theta(H,F) = 0$. If both $H$ and $F$ are unsatisfiable then, by convention, $\theta(H,F) = 1$. 

\begin{lemma}\label{lemma:big_A_if_many_model_for_F_M}
Let $F$ be a 2-CNF and $H = \bigwedge_{i = 1}^{h} (\ell_i \lor \ell_{h+i})$ be a $\Pi(F)$-matching subformula of $F$. Consider $\calA' \subseteq \sat(H)$ such that, for all $\alpha \in \calA'$, $F \land \alpha \in \SAT$ and $\alpha(\ell_i) = \alpha(\ell_{h+i}) = 1$ holds for \underline{no more than} $h/2$ indices $i$. Then there is $\calA \subseteq \calA'$ that verifies the assumptions of Lemma~\ref{lemma:B_has_size_2_to_A} and
$$
|\calA| \geq \frac{|\calA'|}{3^{h/2}2^{h/2}}.
$$
\end{lemma}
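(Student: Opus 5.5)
Each $\alpha \in \sat(H)$ is determined by its \emph{pattern}: the vector $p(\alpha) \in \{10,01,11\}^h$ recording $(\alpha(\ell_i),\alpha(\ell_{h+i}))$ for each clause. For $\alpha \neq \beta$, the condition required in Lemma~\ref{lemma:B_has_size_2_to_A} holds exactly when some index $i$ has $p_i(\alpha)=10$ and $p_i(\beta)=01$, or the reverse. So two assignments violate the condition (call them \emph{compatible}) precisely when, at every $i$, their patterns are either equal or one of them is $11$. The plan is to exhibit $\calA$ as an independent set in the compatibility graph on $\calA'$ via a greedy argument: if every element of $\calA'$ is compatible with at most $D$ elements of $\calA'$ (itself included), we can keep $|\calA'|/D$ of them. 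Choosing $\alpha$ and discarding everything compatible with it removes at most $D$ elements, so the greedy procedure yields $\calA$ with $|\calA| \geq |\calA'|/D$. Every kept $\alpha$ is still in $\sat(H)$ and satisfies $F\land\alpha\in\SAT$, so $\calA$ meets the hypotheses of Lemma~\ref{lemma:B_has_size_2_to_A}.

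It remains to bound $D$ by $3^{h/2}2^{h/2}$. Fix $\alpha \in \calA'$ and let $S$ be the set of indices $i$ with $p_i(\alpha)=11$. By assumption $|S| \leq h/2$. Let $\beta$ be compatible with $\alpha$. For $i \notin S$, $p_i(\alpha)\in\{10,01\}$, so $p_i(\beta)$ must equal $p_i(\alpha)$ or be $11$, which leaves two choices. For $i \in S$, $p_i(\beta)$ can be any of the three patterns. This gives at most $3^{|S|}2^{h-|S|}$ compatible $\beta$.

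Since $3>2$, this expression is increasing in $|S|$, and $|S|\le h/2$ gives the bound $3^{h/2}2^{h/2}$. This holds for every $\alpha$, so $D \leq 3^{h/2}2^{h/2}$, and the greedy bound completes the proof. When $h$ is odd, the same computation gives $3^{\lfloor h/2\rfloor}2^{\lceil h/2\rceil} \leq 3^{h/2}2^{h/2}$, so the claimed bound still holds.

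The step that needs the most care is getting the case analysis of the compatibility relation exactly right. The condition fails for $\alpha\neq\beta$ only when every index is equal or involves $11$ on at least one side. In particular, $\beta$ may have $11$ at an index where $\alpha$ does not. That is why the non-$S$ indices contribute a factor $2$ rather than $1$, and it is also where the hypothesis on the number of $11$ entries is used. Once this is set up correctly, the rest is routine counting.
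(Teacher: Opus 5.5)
Your proof is correct and is essentially the paper's argument. Both use a greedy selection that discards every $\beta$ whose pattern, on each index where $\alpha$ is not $(1,1)$, equals $\alpha$'s or is $(1,1)$, and both bound the number discarded by $3^{h-|I_\alpha|}2^{|I_\alpha|}\le 3^{h/2}2^{h/2}$; your explicit handling of odd $h$ is a small detail the paper leaves implicit.
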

\begin{proof}
All assignments $\alpha$ in $\calA'$ satisfy $H$, so for each $i \in [h]$ we have three possibilities for $(\alpha(\ell_i),\alpha(\ell_{h+i}))$, namely $(1,1)$, $(0,1)$ and $(1,0)$. We denote by $I_\alpha$ the set of $i$ such that $\alpha(\ell_i) \neq \alpha(\ell_{h+i})$. By assumption, we have 
$
|I_\alpha| \geq h/2$ for every $\alpha \in \calA'$.
We start with $\calA = \emptyset$ and fill it by repeating the following two steps until $\calA'$ is empty: 
\smallskip
\begin{itemize}
\item select $\alpha$ in $\calA'$ and add it to $\calA$;
\item remove from $\calA'$ all assignments $\beta$ such that, for every $i \in I_\alpha$ we have $(\beta(\ell_i),\beta(\ell_{h+i})) = \alpha(\ell_i),\alpha(\ell_{h+i}))$ or $(\beta(\ell_i),\beta(\ell_{h+i})) = (1,1)$.
\end{itemize}  
\smallskip
The assignments removed from $\calA'$ in the second step are said to be \emph{removed because of $\alpha$}. Note that the second step removes at least $\alpha$ from $\calA'$ so the process terminates. Thanks to the removal, after $\alpha$ is added to $\calA$ we are guaranteed that all future assignments $\gamma$ added to $\calA$ verify $\gamma(\ell_i) \neq \alpha(\ell_i)$ and $\gamma(\ell_{h+i}) \neq \alpha(\ell_{h+i})$ for some $i \in I_\alpha$. The resulting set $\calA$ thus verifies the assumptions of Lemma~\ref{lemma:B_has_size_2_to_A}.

Now, how many assignments $\beta$ can be removed because of $\alpha$? Well, for each $i \not\in I_\alpha$ the pair $(\beta(\ell_{i}),\beta(\ell_{h+i}))$ can take all three values $(1,1)$, $(0,1)$ and $(1,0)$ but, for each $i \in I_\alpha$ the pair $(\beta(\ell_{i}),\beta(\ell_{h+i}))$ can only take two values. So 
$$
\text{at most } 3^{h - |I_\alpha|}2^{|I_\alpha|} \text{  assignments are removed because of } \alpha.
$$ 
Since $|I_\alpha| \geq h/2$ we have $3^{h - |I_\alpha|}2^{|I_\alpha|} \leq 3^{h/2}2^{h/2}$ and therefore the construction of $\calA$ goes through at least $\frac{S}{3^{h/2}2^{h/2}}$ many rounds before $\calA'$ is emptied, with $S$ the size of $\calA'$ at the beginning of the procedure. 
\end{proof}

\begin{lemma}\label{lemma:intermediate_lemma_3}
Let $F$ be a 2-CNF and $H = \bigwedge_{i = 1}^{h} (\ell_i \lor \ell_{h+i})$ be a $\Pi(F)$-matching subformula of $F$.
There is a constant $c > 0$ such that, if $\theta(H,F) \geq 2/3$, then $B(F)$ has at least $2^{c h}$ nodes.
\end{lemma}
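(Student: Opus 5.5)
Let $\calA^* = \{\alpha \in \sat(H) : F \land \alpha \in \SAT\}$. The hypothesis $\theta(H,F) \geq 2/3$ gives $|\calA^*| \geq \frac{2}{3}3^h$. (If $H$ were unsatisfiable the convention gives $\theta = 1$, but a matching formula is always satisfiable, so this case does not arise.) The plan is to take $\calA' \subseteq \calA^*$ to be the assignments with at most $h/2$ indices $i$ such that $\alpha(\ell_i) = \alpha(\ell_{h+i}) = 1$. We then apply Lemma~\ref{lemma:big_A_if_many_model_for_F_M} to get a set $\calA$ meeting the assumptions of Lemma~\ref{lemma:B_has_size_2_to_A}, which yields $|B(F)| \geq |\calA'|/6^{h/2}$.

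To lower bound $|\calA'|$, bound the number of $\alpha \in \sat(H)$ with more than $h/2$ indices of type $(1,1)$. Since the $h$ clauses of $H$ are on disjoint variables, $\sat(H)$ is a product of $h$ three-element sets. A uniformly random satisfying assignment of $H$ therefore picks each pair independently and uniformly among $(1,1),(0,1),(1,0)$, so the number of $(1,1)$ pairs is $\mathrm{Bin}(h,1/3)$. Its mean is $h/3 < h/2$, and a Chernoff bound gives probability at most $e^{-c_0 h}$ of exceeding $h/2$. Hence $|\calA'| \geq |\calA^*| - e^{-c_0h}3^h \geq (2/3 - e^{-c_0h})3^h \geq 3^h/3$ for $h$ larger than some constant $h_0$.

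Combining, $|B(F)| \geq \frac{3^h}{3\cdot 6^{h/2}} = \frac{1}{3}(3/2)^{h/2}$. Since $3/2>1$, this is at least $2^{ch}$ for a suitable small $c>0$ once $h$ is large enough. For $h \le h_0$, note that $B(F)$ has at least one node, and shrinking $c$ so that $2^{ch_0} \leq 1$ is impossible, so instead we argue directly. For $h \geq 1$ and $\theta \geq 2/3$, $F$ is satisfiable and $B(F)$ reads some variable of the clause, giving at least $3$ nodes (an internal node and two sinks, or a path through the variables). Choosing $c$ with $2^{ch_0} \leq 3$, and noting that $h=0$ is trivial since $2^0 = 1$, covers these cases.

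The only delicate step is the counting one: the correct move is to use that uniform measure on $\sat(H)$ is a product measure, so the tail bound on $(1,1)$-pairs is a plain Chernoff estimate, and the constant $2/3$ leaves room to subtract an exponentially small fraction. Verifying that $3/\sqrt 6 > 1$ makes the final base exceed $1$ is the crux of why the constants work.
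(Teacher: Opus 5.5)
Your proof is correct and follows the paper's route: intersect the extendable assignments with those having at most $h/2$ pairs of type $(1,1)$, then chain Lemma~\ref{lemma:big_A_if_many_model_for_F_M} and Lemma~\ref{lemma:B_has_size_2_to_A} to get on the order of $(3/2)^{h/2}$ nodes. The differences are minor. The paper shows there are at least $3^h/2$ such assignments for every $h$ by comparing the $j$-th term $\binom{h}{j}2^{h-j}$ with the $(h-j)$-th, so it needs no Chernoff threshold $h_0$. Conversely, you treat small $h$ explicitly (a satisfiable $F$ containing a clause is non-constant, so $B(F)$ has at least $3$ nodes), which the paper leaves implicit even though its bound $\frac16(3/2)^{h/2}$ alone does not dominate $2^{ch}$ for small $h$.
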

\begin{proof}
Recall that $|\sat(H)| = 3^h$. For every $\alpha \in \sat(H)$, we say that 
\begin{itemize}
\item $\alpha$ has the property $(p_1)$ when $F \land \alpha \in \SAT$; 
\item $\alpha$ has the property $(p_2)$ when $\alpha(\ell_i) = \alpha(\ell_{h+i}) = 1$ for no more than $h/2$ indexes $i \in [h]$.
\end{itemize}
The number of assignments in $\sat(H)$ that have property $(p_2)$ is 
$$
\sum_{j = 0}^{h/2} \binom{h}{j}2^{h-j} \geq \frac{1}{2}\left(\sum_{j = 0}^{h/2} \binom{h}{j}2^{h-j} + \sum_{j = h/2+1}^{h} \binom{h}{j}2^{h-j}\right) = \frac{3^h}{2}
$$
$\theta(H,F) \geq 2/3$ ensures that the number of assignments of $\sat(H)$ that have property $(p_1)$ is at least $\frac{2}{3} 3^h$. Therefore, $\theta(H,F) \geq 2/3$ implies that there are at least $\frac{2}{3} 3^h - \frac{1}{2}3^h = \frac{1}{6} 3^h$ assignments in $\sat(H)$ with both property $(p_1)$ and property $(p_2)$.  Thus, we use Lemma~\ref{lemma:big_A_if_many_model_for_F_M} to find a set $\calA \subseteq \sat(H)$ verifying the conditions of Lemma~\ref{lemma:B_has_size_2_to_A} with 
$$
|\calA| \geq \frac{1}{6}\cdot \frac{3^h}{3^{h/2}2^{h/2}} = \frac{1}{6}\cdot\frac{3^{h/2}}{2^{h/2}} = \frac{1}{6}\cdot 2^{h\frac{\log(3)-1}{2}}.
$$
The proof is finished using Lemma~\ref{lemma:B_has_size_2_to_A}.
\end{proof}

Now, we move on with the proof of Lemma~\ref{lemma:target_lemma}. For that, we will need two more lemmas stated afterwards, but the final bound on the OBDD size follows from Lemma~\ref{lemma:intermediate_lemma_3}.

\begin{proof}[Proof of of Lemma~\ref{lemma:target_lemma}]
Let $F \sim \calH_2(n,m)$. The clauses in $F$ are ordered by construction. We consider the first $k = n^{1/3}$ clauses on one side and the remaining $m - k$ on the other. 
$$
F = F^{\leq k} \land F^{> k} 
$$
Since $k = o(\sqrt{n})$, Lemma~\ref{lemma:the_first_k_clauses_in_H_are_a_matching} ensures that $F^{\leq k}$ is a matching formula with high probability and Lemma~\ref{lemma:the_first_k_clauses_in_H_contain_a_big_PiF-matching} ensures that $F^{\leq k}$ contains a subformula $H$ that is a $\Pi(F)$-matching formula of size $\Omega(k)$ with high probability. It remains to show that $\theta(H,F) \geq 2/3$ with high probability.  Once that is proved, Lemma~\ref{lemma:target_lemma} follows by Lemma~\ref{lemma:intermediate_lemma_3}: if $\theta(H,F) \geq 2/3$ then $B(F)$ has at least 
$$
2^{\Omega(k)} \geq  2^{\Omega(n^{1/3})} 
$$ 
nodes. Since $B(F)$ is by definition a minimum-size OBDD for $F$, the statement is proved. 

The last bit, namely that $\theta(H,F) \geq 2/3$ holds with high probability, follows from Lemmas~\ref{lemma:technical-lemma} and~\ref{lemma:theta_for_subformula}. Lemma~\ref{lemma:theta_for_subformula} says that $\theta(F^{\leq k},F^{> k}) \geq 2/3$ implies $\theta(H,F) \geq 2/3$ when $F^{\leq k}$ is a matching formula, and Lemma~\ref{lemma:technical-lemma} is a ``technical lemma'' saying that $\theta(F^{\leq k},F^{> k}) \geq 2/3$ is almost always true; its proof is provided in the last technical section.
\end{proof}

\begin{restatable}{lemma}{technicalLemma}\label{lemma:technical-lemma}
Let $\tfrac{1}{2} \leq \delta < 1$, $F \sim \calH_2(n,\delta n)$ and $k = n^{1/3}$. Then 
$$
\lim\limits_{n \rightarrow \infty} \Pr\left[\theta\left(F^{\leq k},F^{> k}\right) < \tfrac{2}{3}\right] = 0.
$$
\end{restatable}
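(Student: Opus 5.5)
We prove the statement by conditioning on $F^{>k}$ and exploiting the independence of $F^{\leq k}$ from $F^{>k}$ in $\calH_2(n,\delta n)$. By Lemma~\ref{lemma:the_first_k_clauses_in_H_are_a_matching} we may restrict to the event that $F^{\leq k}\in\MF_{n,k}$. Conditioned on this event, $F^{\leq k}$ is uniform over $\MF_{n,k}$, since every matching formula with $k$ clauses is produced by the same number of ordered clause sequences. A uniformly random solution $\alpha$ of a uniformly random $H\in\MF_{n,k}$ has a simple description:
\begin{itemize}
\item its variable set $V=\var(H)$ is a uniformly random set of $2k$ variables;
\item its values on $V$ follow a distribution that is invariant under flipping signs, because the literal signs in $H$ are uniform.
\end{itemize}
Write $\theta(F^{\leq k},F^{>k})$ as the probability, over a random solution of $F^{\leq k}$, that $F^{>k}\wedge\alpha\in\SAT$. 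Then, by Markov's inequality applied to $1-\theta$, it suffices to show
$$
\Pr_{F^{>k},\,H,\,\alpha}\bigl[F^{>k}\wedge \alpha\notin\SAT\bigr]=o(1),
$$
where $\alpha$ is a uniform solution of $H$. Indeed, $\Pr[1-\theta>1/3]\leq 3\,\Ex[1-\theta]$.

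Next we reduce to the regularity theorem of~\cite{basse2025regularity,rasmussen2025fixing}. That theorem states that for $\delta<1$, with high probability over $F'\sim$ a random 2-CNF of density below $1$, a uniformly random partial assignment to $o(\sqrt n)$ uniformly chosen variables extends to a satisfying assignment of $F'$ with probability $1-o(1)$. Our $\alpha$ is not uniform: each pair is uniform over the three satisfying patterns rather than the four. However, the pair $(H,\alpha)$ can be generated in two steps. First pick $2k$ random variables and a uniformly random assignment $\beta$ to them. Then choose the signs of $H$ so that $\alpha=\beta$ satisfies each clause; this is possible exactly when $\beta$ assigns a pair to anything other than $(0,0)$ in literal terms. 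Since $H$ only matters through $\alpha$, and the signs are uniform, the marginal law of $\alpha$ is that of a uniform assignment to $2k$ uniformly random variables. Concretely, flipping the sign of a literal flips the corresponding value, so $\alpha$'s values are uniform and independent on $V$. Thus $\alpha$ is exactly a uniformly random assignment to a random set of $2k=2n^{1/3}=o(\sqrt n)$ variables, independent of $F^{>k}$.

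It remains to match the formula model. $F^{>k}\sim\calH_2(n,\delta n-k)$ has density strictly below $1$. Either the theorem is stated for a with-replacement model, or we pass between models as in Lemma~\ref{lemma:from_H_to_F}, using that a failure probability tending to $0$ transfers. Combining these facts gives $\Ex[1-\theta]=o(1)$. The boundary case $\delta=1/2$ requires no special treatment.

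The main obstacle is the exact form of the cited theorem: whether it allows the assignment to be of size $2n^{1/3}$ on uniformly chosen variables, and which random-formula model it uses. If its statement is in expectation rather than with high probability over the formula, the Markov step above still works as written. The independence and symmetry arguments that make $\alpha$ uniform are routine.
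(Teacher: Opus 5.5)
Your route is essentially the paper's. You restrict to $F^{\le k}\in\MF_{n,k}$ and use that a uniform solution of a uniform matching formula is exactly a uniform assignment to a uniform set of $2k$ variables, independent of $F^{>k}$. You then average and invoke the regularity theorem. Your one-shot bound $\Pr[\theta<2/3]\le 3\,\Ex[1-\theta]$ over the joint law of $(F^{\le k},F^{>k})$ is a tidier version of the paper's two-stage argument through $\mathscr{P}_{\varepsilon,n}$. Your two-step description of $(H,\alpha)$ is garbled: for every $\beta$ and every pair, $3$ of the $4$ sign patterns are satisfied by $\beta$, so there is no ``possible exactly when'' condition. The sign-flip invariance you end with is nonetheless a valid proof of uniformity, as is the paper's direct count $\binom{n}{2k}^{-1}(3/4)^k3^{-k}=\binom{n}{2k}^{-1}2^{-2k}$.

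The step that fails as written is the model transfer. The cited theorem is stated for $\mathcal{F}_2(n,\delta n)$, while $F^{>k}\sim\mathcal{H}_2(n,\delta n-k)$, so you must go from $\mathcal{F}$ to $\mathcal{H}$. Lemma~\ref{lemma:from_H_to_F} goes the other way, and its Bayes argument does not reverse. It only uses that $\Pr_{\mathcal{H}}[\text{simple}]$ is bounded below. The reverse direction would need $\Pr_{\mathcal{H}}[\text{not simple}]\to 0$, which is false, since this tends to a positive constant. Indeed, for a general property, negligibility does not transfer from $\mathcal{F}$ to $\mathcal{H}$; take ``some clause is repeated''.

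The missing ingredient is monotonicity. The event $\alpha\wedge G\notin\SAT$ is preserved under adding clauses. The set of distinct clauses of $G\sim\mathcal{H}_2(n,\delta n-k)$, conditioned on having size $m'\le\delta n$, is a uniform $m'$-subset of all clauses. It can therefore be coupled as a subset of a sample of $\mathcal{F}_2(n,\delta n)$. This gives $\Pr_{\mathcal{H}_2(n,\delta n-k)}[\alpha\wedge G\notin\SAT]\le\Pr_{\mathcal{F}_2(n,\delta n)}[\alpha\wedge G\notin\SAT]=o(1)$, which is exactly the paper's Claim~1.

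The same coupling also handles the clause count $\delta n-n^{1/3}$. That count is not $\delta' n$ for a fixed constant $\delta'$, so ``density strictly below $1$'' alone does not let you invoke the theorem. With this replacement your argument is complete.
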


\begin{lemma}\label{lemma:theta_for_subformula}
Let $F$ be a 2-CNF, possibly with duplicate clauses. Partition the clauses of $F$ into two subformulas $F_1$ and $F_2$. Suppose $F_1$ is a matching formula and suppose $\theta(F_1,F_2) \geq 2/3$, then for every $H \subseteq F_1$ we have $\theta(H,F) \geq 2/3$.
\end{lemma}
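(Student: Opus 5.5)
We prove Lemma~\ref{lemma:theta_for_subformula} by projecting assignments of $F_1$ onto $H$ and counting fibers. Since $F_1$ is a matching formula, write $F_1 = H \land H'$, where $H'$ consists of the clauses of $F_1$ not in $H$. Because the clauses of a matching formula have pairwise disjoint variable sets, $\var(H)$ and $\var(H')$ are disjoint, and
\[
\sat(F_1) = \sat(H) \times \sat(H').
\]
Moreover $F = F_1 \land F_2$. If $H$ is empty, then $\sat(H)$ is the single empty assignment, and $\theta(H,F) \geq 2/3$ says exactly that $F$ is satisfiable; this is covered by the argument below, since $\theta(F_1,F_2) > 0$ with $F_1$ satisfiable gives $F \in \SAT$.

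The first step is to relate extendability under $F$ to extendability under $F_2$. Take $\gamma \in \sat(F_1)$ with $\gamma \land F_2 \in \SAT$. An extension of $\gamma$ satisfying $F_2$ also satisfies $F_1$, because it agrees with $\gamma$ on $\var(F_1)$. Hence $\gamma \land F \in \SAT$, and in particular the restriction $\gamma|_{\var(H)} \in \sat(H)$ satisfies $\gamma|_{\var(H)} \land F \in \SAT$.

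The second step is the counting. Call $\alpha \in \sat(H)$ \emph{good} if $\alpha \land F \in \SAT$. If $\alpha$ is bad, then every $\gamma = \alpha \cup \beta$ with $\beta \in \sat(H')$ fails $\gamma \land F_2 \in \SAT$, by the contrapositive of the first step. So each bad $\alpha$ accounts for a full fiber of $|\sat(H')|$ bad assignments of $F_1$. Therefore
\[
\#\{\gamma \in \sat(F_1) : \gamma \land F_2 \in \SAT\} \leq \#\mathrm{good}\cdot|\sat(H')|,
\]
and since $|\sat(F_1)| = |\sat(H)|\,|\sat(H')|$, dividing gives $\theta(H,F) \geq \theta(F_1,F_2) \geq 2/3$.

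It remains to handle the conventions. A matching formula is always satisfiable, with $|\sat(H)| = 3^h > 0$, so no division by zero occurs and the unsatisfiable-$H$ convention never arises. Duplicate clauses in $F$ do not matter, since satisfiability is unaffected. The only real subtlety is the disjointness of variables between $H$ and $H'$, which gives the product structure; this follows directly from the definition of a matching formula.
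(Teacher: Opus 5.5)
Your proof is correct and follows essentially the same route as the paper: an assignment satisfying $F_1$ that extends to a model of $F_2$ also extends to a model of $F$, and the solutions of $F_1$ split as a product of the solutions of $H$ and of the remaining clauses, which gives the fiber count. The only difference is cosmetic: you prove the inequality $\theta(H,F) \geq \theta(F_1,F_2)$ directly, whereas the paper argues by contradiction through the intermediate bound $\theta(F_1,F) \geq 2/3$.
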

\begin{proof}
Let $L$ be the clauses in $F_1$ that are not in $H$. Since $F_1$ is a matching formula, it has no duplicate clauses, and $H$ and $L$ are also matching formulas.  Let $f$, $h$  and $l$ be the number of clauses of $F_1$, $H$ and $L$, respectively. We have $f = h + l$ and $|\sat(F_1)| = 3^{f}$, $|\sat(H)| = 3^{h}$ and $|\sat(L)| = 3^{l}$. For $\alpha \in \sat(F_1)$, let $\alpha_H$ be its restriction to $\var(H)$ and $\alpha_L$ be its restriction to $\var(L)$. $\sat(F_1)$ is the Cartesian product of $\sat(H) \times \sat(L)$ in the sense that $\alpha \in \sat(F_1)$ if and only if the $\alpha_H \in \sat(H)$ and $\alpha_L \in \sat(L)$.  

$\theta(F_1,F_2) \geq 2/3$ implies $\theta(F_1,F) \geq 2/3$ because, if $\alpha \in \sat(F_1)$ is such that $F_2 \land \alpha \in \SAT$, then $F \land \alpha = F_1 \land F_2 \land \alpha$ is clearly in $\SAT$. Now, suppose $\theta(H,F) < 2/3$ and let $S \subseteq \sat(H)$ be the set of all $\beta \in \sat(H)$ such that $F \land \beta \in \SAT$. An assignment $\alpha \in \sat(F_1)$ can be such that $F \land \alpha \in \SAT$ only if the $F \land \alpha_H \in \SAT$. Thus, such $\alpha$'s are all contained in $S \times \sat(L)$. But then $\theta(F_1,F) \leq \frac{|S|\cdot |\sat(L)|}{|\sat(F_1)|} < \frac{(2/3)3^h \cdot 3^l}{3^f} = \frac{2}{3}$, a contradiction.
\end{proof}

\section{Proof of Lemma~\ref{lemma:technical-lemma}}\label{section:technical-lemma}

For the proof, we distinguish random formulas and random assignments from fixed formulas and fixed assignments by use of the tilde symbol. I.e., random formulas and assignments are denoted by $\tilde{F}$, $\tilde{G}$, $\tilde{H}$, $\tilde{\alpha}$, etc. The purpose of the subsection is to prove that, when $\tfrac{1}{2} < \delta < 1$ and $k = n^{1/3}$ (supposed integer), it holds that
\begin{equation}\label{eq:target}
\lim\limits_{n \rightarrow \infty} \Pr\nolimits_{\substack{\tilde F \sim \calH_2(n,m)}}\left[\theta\left(\tilde{F}^{\leq k},\tilde{F}^{> k}\right) < \tfrac{2}{3}\right] = 0
\end{equation}
The proof leans on a result of~\cite{rasmussen2025fixing,basse2025regularity} which implies that if we have a formula from $\tilde{G}\sim \calF_2(n,\delta n)$ and that we take an independent random assignment of size negligible compared to $\sqrt{n}$, i.e., $\tilde \alpha \sim \calF_1(n,o(\sqrt{n}))$, then $\tilde \alpha$ is likely to have an extension satisfying $\tilde{G}$ (as $n$ increases). 

\begin{theorem}[\protect{\cite[Theorem B.5]{rasmussen2025fixing}}]\label{theorem:saviour_thm}
Let $\tilde{G} \sim \calF_2(n,\delta n)$ and an independent $\tilde \alpha \sim \calF_1(n,2k)$ independent of $\tilde{G}$. If  $\delta < 1$ and $k = o(\sqrt{n})$, then $\lim_{n \rightarrow \infty} \Pr_{\tilde{G},\tilde{\alpha}}[\tilde{\alpha} \land \tilde{G} \in \SAT] = 1$.
\end{theorem}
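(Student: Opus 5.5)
The plan is a first-moment argument on the implication graph of $\tilde G$. I work with the literals of $\tilde\alpha$ as $2k$ unit clauses on distinct variables; I take $\calF_1(n,2k)$ to mean exactly this, so $\tilde\alpha$ is itself consistent. For a 2-CNF $G$ let $D(G)$ be the implication digraph on the $2n$ literals, with arcs $\neg a\to b$ and $\neg b\to a$ for every clause $a\lor b$. The standard characterisation of 2-SAT (Aspvall--Plass--Tarjan, or directly via unit propagation) gives the following. If $G$ is satisfiable, then $G\land\tilde\alpha\notin\SAT$ implies that there exist literals $a,b$ of $\tilde\alpha$, possibly with $a=b$, and a directed path in $D(G)$ from $a$ to $\neg b$.

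To see this, run unit propagation from the literals of $\tilde\alpha$. It derives exactly the literals reachable from $\tilde\alpha$ in $D(G)$. If it never derives a complementary pair, the reached set is a consistent partial assignment. This partial assignment leaves every clause satisfied or untouched, and the untouched clauses form a satisfiable subformula of $G$ on the remaining variables.

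If a complementary pair $c,\neg c$ is derived from $a$ and $a'$, the contrapositive arcs of $D(G)$ turn the path $a'\leadsto\neg c$ into a path $c\leadsto\neg a'$. Composing with $a\leadsto c$ gives a path $a\leadsto\neg a'$. Thus
\[
\Pr[\tilde\alpha\land\tilde G\notin\SAT]\le \Pr[\tilde G\notin\SAT]+\Pr[\exists\, a,b\in\tilde\alpha,\ \text{path } a\leadsto \neg b \text{ in } D(\tilde G)].
\]
The first term is $o(1)$ by Theorem~\ref{theorem:sat_threshold_for_2CNF}, since $\delta<1$.

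For the second term I condition on $\tilde\alpha$, which is independent of $\tilde G$. It then suffices to show that for any two fixed literals $u,v$, the expected number of \emph{simple} directed paths from $u$ to $v$ in $D(\tilde G)$ is $O\bigl(1/((1-\delta)n)\bigr)$. Any path contains a simple one, and a simple path of length $L$ uses $L$ distinct clauses; the two arcs of one clause are contrapositives, and a simple path can be shortened to avoid using both.

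A fixed set of $L$ distinct clauses is contained in $\tilde G\sim\calF_2(n,\delta n)$ with probability
\[
\binom{N-L}{m-L}\Big/\binom{N}{m}\le (m/N)^L,\qquad m=\delta n,\ N=4\tbinom n2,
\]
which gives $(m/N)^L\le(\delta/(2n-2))^L$. The number of simple paths of length $L$ from $u$ to $v$ is at most $(2n)^{L-1}$, one choice of literal for each intermediate vertex. Hence the expected number of such paths is at most
\[
\sum_{L\ge1}(2n)^{L-1}\Bigl(\frac{\delta}{2n-2}\Bigr)^L = O\!\Bigl(\frac1n\sum_{L\ge1}\delta^L(1+O(1/n))^L\Bigr).
\]
Paths have length at most $2n$, so $(1+O(1/n))^L=O(1)$. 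The series converges because $\delta<1$, and this is exactly where subcriticality enters.

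A union bound over the at most $(2k)^2$ ordered pairs $(a,b)$ then gives probability $O(k^2/n)=o(1)$, since $k=o(\sqrt n)$. This finishes the argument.

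I expect the main care to be needed in the deterministic 2-SAT step: stating precisely that unsatisfiability of $G\land\tilde\alpha$, given $G\in\SAT$, forces such a path, including the case $a=b$. By comparison, the first-moment computation is routine once one restricts to simple paths and uses the without-replacement bound $(m/N)^L$. That bound avoids any correlation issue in $\calF_2$; alternatively one could do the computation in $\calH_2$ and transfer it with Lemma~\ref{lemma:from_H_to_F}.
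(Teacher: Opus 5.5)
\textbf{The gap.} Your deterministic 2-SAT step (including the case $a=b$), the bound $\binom{N-L}{m-L}/\binom{N}{m}\le (m/N)^L$ and the final union bound are fine. The reduction to paths with $L$ distinct clauses is not. It is false that a simple path in $D(\tilde G)$ can always be shortened so as not to use both arcs of one clause.

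Take literals $u,v$ on two variables, fresh variables $x,y,z$, and the clauses $(\neg u\lor x)$, $(\neg x\lor y)$, $(\neg y\lor z)$, $(\neg y\lor\neg z)$, $(x\lor v)$. Following in-arcs backwards from $v$, the only paths from $u$ to $v$ are $u\to x\to y\to z\to\neg y\to\neg x\to v$ and the same path through $\neg z$. Both use $x\to y$ together with its contrapositive $\neg y\to\neg x$.

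This formula is satisfiable, while adding $u\land\neg v$ makes it unsatisfiable, so this is exactly the kind of witness you must exclude. The path has length $6$ but only $5$ distinct clauses. Its probability is therefore of order $(m/N)^5$, not $(m/N)^6$, and your sum $\sum_L(2n)^{L-1}(m/N)^L$ does not cover it. As written, the claim $\Pr[\text{path from } u \text{ to } v]=O(1/n)$ is not established.

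\textbf{The repair.} One extra short count suffices. Write a path as $w_0=u,\dots,w_L=v$ and let $z_r=\var(w_r)$.
\begin{itemize}
\item If the $z_r$ are pairwise distinct, the clauses $(\neg w_r\lor w_{r+1})$ are pairwise distinct and your count applies.
\item Otherwise, let $t$ be the first index with $z_t=z_s$ for some $s<t$. Then $s\le t-2$, since no arc joins two literals of the same variable. The first $t$ clauses are pairwise distinct: the only possible coincidence is between the last two when $s=t-2$, and that would need a clause containing both $w_{t-1}$ and $\neg w_{t-1}$.
\end{itemize}
From a fixed $u$ there are at most $2t(2n)^{t-1}$ such prefixes. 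Their expected number is at most $\sum_{t\le n+1}2t(2n)^{t-1}\bigl(\delta/(2n-2)\bigr)^t=O(1/n)$ for $\delta<1$. A union bound over the $2k$ literals of $\tilde\alpha$ then adds only $O(k/n)=o(1)$.

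With this addition your argument becomes a complete, self-contained first-moment proof with an explicit $O(k^2/n)$ error term. The paper itself gives no proof of this statement; it imports it from \cite[Theorem~B.5]{rasmussen2025fixing}.
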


\begin{proof}[Proof of Lemma~\ref{lemma:technical-lemma}] Since $\tilde{F}$ is drawn from $\calH_2(n,m)$, $\tilde{F}^{\leq k}$ and $\tilde{F}^{> k}$ are independent, hence
\begin{equation}\label{eq:split-formula}
\Pr\nolimits_{\tilde{F} \sim \calH_2(n,\delta n)}\left[\theta\left(\tilde{F}^{\leq k},\tilde{F}^{> k}\right) < \tfrac{2}{3}\right] 
= 
\Pr\nolimits_{\substack{\tilde{H} \sim \calH_2(n,k) \\ \tilde{G} \sim \calH_2(n,\delta n - k)}}\left[\theta\left(\tilde{H},\tilde{G}\right) <  \tfrac{2}{3}\right]
\end{equation}
where $\tilde{G}$ and $\tilde{H}$ are independent. Let $m = \delta n - k$ and $\tilde\alpha$ be a random assignment to $x_1,\dots,x_n$ of size $2k$, independent of $\tilde G$. We claim that the probability that $\tilde\alpha \land \tilde G$ is in $\SAT$ is converging to $1$ as $n$ increases
\begin{claim}\label{claim:1} The following holds
\begin{equation}\label{eq:random_assgn_sat}
\lim\limits_{n \rightarrow \infty}\Pr_{\substack{\tilde G \sim \calH_2(n,m) \\ \tilde\alpha \sim \calF_1(n,2k)}}\left[\tilde\alpha \land \tilde G\in \SAT\right] = 1.
\end{equation}
\end{claim}
\begin{proof} Theorem~\ref{theorem:saviour_thm}  gives $\lim_{n\to\infty}\Pr_{\substack{\tilde G' \sim \calF_2(n,\delta n) \\ \tilde\alpha \sim \calF_1(n,2k)}}
[\tilde\alpha \land \tilde G' \in\SAT]=1$. We have
\[
\Pr_{\substack{\tilde G' \sim \calF_2(n,\delta n) \\ \tilde\alpha \sim \calF_1(n,2k)}}
[\tilde\alpha \land \tilde G' \in\SAT] 
\leq
\Pr_{\substack{\tilde G'' \sim \calH_2(n,\delta n) \\ \tilde\alpha \sim \calF_1(n,2k)}}
[\tilde\alpha \land \tilde G'' \in\SAT]
\leq
\Pr_{\substack{\tilde G \sim \calH_2(n,m) \\ \tilde\alpha \sim \calF_1(n,2k)}}
[\tilde\alpha \land \tilde G \in\SAT]
\]
where the first inequality holds because $\calH_2(n,\delta n)$ may produce duplicate 
clauses and hence has fewer distinct constraints than $\calF_2(n,\delta n)$ on average, 
and the second holds because $m = \delta n -  k < \delta n$ and fewer clauses are easier to satisfy. The claimed statement follows.
\end{proof}
Fix $\varepsilon > 0, n \in \mathbb{N}$ and let $\scrP_{\varepsilon,n}$ be the set of formulas such that, 
$F \in \scrP_{\varepsilon,n}$ if and only if $\Pr_{\tilde\alpha}[\tilde \alpha \land F \in \SAT] \geq 1 - \varepsilon$ where $\tilde \alpha\sim \calF_1(n,2k)$.

\begin{claim}\label{claim:2}
For $\varepsilon > 0$ fixed, 
\begin{equation}
\lim\limits_{n \rightarrow \infty}
\Pr_{\tilde G \sim \calH_2(n,m)}\left[\tilde G \in \scrP_{\varepsilon,n}\right] = 1.
\end{equation}
\end{claim}
\begin{proof} We start from (\ref{eq:random_assgn_sat}). Since $\tilde{G}$ and $\tilde{\alpha}$ are independent, by the law of total expectation
applied to the indicator $\mathbbm{1}[\tilde{\alpha} \wedge \tilde{G} \in \mathrm{SAT}]$,
\begin{align*}
\mathbb{E}_{\substack{\tilde{G} \sim \calH_2(n,m) \\ \tilde{\alpha} \sim F_1(n,2k)}}\left[\mathbbm{1}[\tilde{\alpha} \wedge \tilde{G} \in \mathrm{SAT}]\right]
&= \mathbb{E}_{\tilde{G} \sim \calH_2(n,m) }\left[\mathbb{E}_{\tilde{\alpha} \sim F_1(n,2k)}\left[\mathbbm{1}[\tilde{\alpha} \wedge \tilde{G} 
\in \mathrm{SAT}] \mid \tilde{G}\right]\right]\\
&= \mathbb{E}_{\tilde{G} \sim \calH_2(n,m) }\left[\mathbb{E}_{ \tilde{\alpha} \sim F_1(n,2k)}\left[\mathbbm{1}[\tilde{\alpha} \wedge \tilde{G} 
\in \mathrm{SAT}]\right]\right],
\end{align*}
where the second equality uses independence of $\tilde{G}$ and $\tilde{\alpha}$.
Therefore, this expectation converges to $1$. Since the summand is in $[0,1]$,
Markov's inequality gives the following:
\[\Pr_{\tilde{G}}\left[1 - \mathbb{E}_{\tilde{\alpha}}\left[\mathbbm{1}[\tilde{\alpha} \wedge \tilde{G} 
\in \mathrm{SAT}]\right] > \varepsilon\right] 
\leq \frac{\mathbb{E}_{\tilde{G}}\left[1 - \mathbb{E}_{\tilde{\alpha}}\left[\mathbbm{1}[\tilde{\alpha} 
\wedge \tilde{G} \in \mathrm{SAT}]\right]\right]}{\varepsilon} \xrightarrow{n \rightarrow \infty} 0.
\]
It follows by definition of $\scrP_{\varepsilon,n}$ that
\[
\Pr_{\tilde{G}}\left[\tilde{G} \notin \scrP_{\varepsilon,n}\right]
= \Pr_{\tilde{G}}\left[\mathbb{E}_{\tilde{\alpha}}\left[\mathbbm{1}[\tilde{\alpha} \wedge 
\tilde{G} \in \mathrm{SAT}]\right]\right] \xrightarrow{n \rightarrow \infty} 0.\qedhere
\]
\end{proof}
Now, let us describe an alternative construction of $\tilde\alpha$. 

\begin{itemize}
\item Draw a matching formula $\tilde Z$ uniformly from $MF(n,k)$ independently of $\tilde G$.
\item Draw $\tilde \alpha_{\tilde{Z}}$ uniformly at random in $\sat(\tilde{Z})$; 
\end{itemize}

Intuitively, since $\tilde Z$ is independent of $\tilde G$,  $\tilde \alpha_{\tilde{Z}}$ is seen by $\tilde G$ as an independent random assignment of size $2k$ over $x_1,\dots,x_n$, as if it came from $\calF_1(n,2k)$. 

Since the variables $x_1,\dots,x_n$  play a symmetric role in the construction of $\tilde{Z}$, any two fixed assignments $\alpha$ and $\beta$ of size $2k$ over $x_1,\dots,x_n$ are equally likely to be solutions of $\tilde{Z}$. Formally,
\begin{equation}
\Pr_{\tilde{Z}}\left[\alpha \in \sat(\tilde{Z})\right] = \Pr_{\tilde{Z}}\left[\beta \in \sat(\tilde{Z}) \right]
\end{equation}
Since $\tilde{Z}$ is in $MF(n,k)$ we have $\Ex[|\sat(\tilde{Z})|] = 3^{k}$. So the above probability is $3^{k}$ divided by the number of possible assignments, i.e., $3^{k}\left(2^{2k}\binom{n}{2k}\right)^{-1}$. Thus,
\begin{equation}
\Pr_{\tilde{Z}}\left[\tilde\alpha_{\tilde Z} = \alpha\right] = \tfrac{1}{3^{k}} \Pr_{\tilde{Z}}\left[\alpha \in \sat(\tilde{Z})\right]
= {\textstyle \left(2^{2k}\binom{n}{2k}\right)^{-1}} = \Pr_{\tilde{\alpha} \sim \calF_1(n,2k)}\left[\tilde{\alpha} = \alpha\right].
\end{equation}
It follows that, for $F$ a fixed CNF, 
\begin{equation}
\Pr\limits_{\tilde Z}\left[ \tilde\alpha_{\tilde Z} \land F \in \SAT \right] = \Pr\limits_{\tilde{\alpha} \sim  \calF_1(n,2k)}\left[\tilde\alpha \land F \in \SAT\right].
\end{equation}
Let us define $\scrP_F$ as follows: $F' \in \scrP_F$ if and only if $\theta(F',F) \geq 2/3$. We have that $\Pr_{\tilde{Z}}\left[ \tilde\alpha_{\tilde Z} \land F \in \SAT \right]$ equals
\begin{align*}
& 
\Pr_{\tilde{Z}}\left[\tilde{Z} \not\in \scrP_F\right] \Pr_{\tilde{Z}}\left[\tilde{\alpha}_{\tilde Z} \land F \in \SAT \mid \tilde{Z} \not\in \scrP_F\right]
+
\Pr_{\tilde{Z}}\left[\tilde{Z} \in \scrP_F\right]\Pr_{\tilde{Z}}\left[\tilde{\alpha}_{\tilde Z} \land F \in \SAT \mid \tilde{Z} \in \scrP_F\right]
\\
&\leq
\Pr_{\tilde{Z}}\left[\tilde{Z} \not\in \scrP_F\right]\Pr_{\tilde{Z}}\left[\tilde{\alpha}_{\tilde Z} \land F \in \SAT \mid \tilde{Z} \not\in \scrP_F\right] + \Pr_{\tilde{Z}}\left[\tilde{Z} \in \scrP_F\right]
\\
&\leq
\Pr_{\tilde{Z}}\left[\tilde{Z} \not\in \scrP_F\right]\cdot\tfrac{2}{3} + \Pr_{\tilde{Z}}\left[\tilde{Z} \in \scrP_F\right] 
= \tfrac{2}{3} + \tfrac{1}{3}\cdot\Pr_{\tilde{Z}}\left[\tilde{Z} \in \scrP_F\right] 
 = \tfrac{2}{3} + \tfrac{1}{3}\cdot\Pr_{\tilde{Z}}\left[\theta(\tilde{Z},F) \geq \tfrac{2}{3}\right].
\end{align*}
In particular, when $F \in \scrP_{\varepsilon,n}$ we also have $\Pr_{\tilde{Z}}[ \tilde\alpha_{\tilde Z} \land F \in \SAT ] \geq 1 - \varepsilon$, and therefore 
\begin{equation}
F \in \scrP_{\varepsilon,n} \quad \Rightarrow \quad \Pr_{\tilde{Z}}\left[\theta(\tilde{Z},F) \geq \tfrac{2}{3}\right] \geq 1 - 3\varepsilon.
\end{equation}
When $\tilde H$ is conditioned on being a matching formula it is equally likely to be any formula from $MF(n,k)$, so it behaves like $\tilde{Z}$. 
\begin{equation}\label{eq:nice-bound}
\Pr_{\tilde{H},\tilde{G}}\left[\theta(\tilde{H},\tilde{G}) \geq \tfrac{2}{3} \mid \tilde{G} \in \scrP_{\varepsilon,n}, \tilde{H} \in \MF(n,k)\right] \geq 1 - 3\varepsilon.
\end{equation}
Since $\tilde{H}$ is almost always a matching formula (Lemma~\ref{lemma:the_first_k_clauses_in_H_are_a_matching}) and since $\tilde{G}$ is almost always in $\scrP_{\varepsilon,n}$ (Claim~\ref{claim:2}), we have that for every fixed $\varepsilon > 0$ we have that 
\begin{multline*}
\hfill \Pr\nolimits_{\substack{\tilde{H} \sim \calH_2(n,k) \\ \tilde{G} \sim \calH_2(n,\delta n - k)}}\left[\theta\left(\tilde{H},\tilde{G}\right) < \tfrac{2}{3}\right] = 
\Pr_{\tilde{H},\tilde{G}}\left[\theta\left(\tilde{H},\tilde{G}\right) < \tfrac{2}{3}\mid \tilde{G} \in \scrP_{\varepsilon,n},\tilde{H} \in \MF(n,k)\right]
\\
\hfill \cdot \Pr_{\tilde{H},\tilde{G}}\left[\tilde{G} \in \scrP_{\varepsilon,n},  \tilde{H} \in \MF(n,k)\right] + o(1)
\\
\hfill \leq \Pr_{\tilde{H},\tilde{G}}\left[\theta\left(\tilde{H},\tilde{G}\right) < \tfrac{2}{3}\mid \tilde{G} \in \scrP_{\varepsilon,n},\tilde{H} \in \MF(n,k)\right] + o(1) 
\leq 3\varepsilon + o(1).
\end{multline*}
This holds for every $\varepsilon > 0$, so we finally get (\ref{eq:target}), and the proof of Lemma~\ref{lemma:technical-lemma} is done.
\end{proof}

One last comment about the proof. A careful reader could object that, in terms of graph parameters, Theorem~\ref{theorem:compilability_threshold_monotone} requires the treewidth to be large and the degree to be small, while we never discuss the degree for Theorem~\ref{theorem:compilability_thresholds}, which may seem odd. And indeed, the fact that the maximum degree of $G_F$ is generally small (less than $\log(n)$) with high probability is likely important for Theorem~\ref{theorem:compilability_thresholds}, but, in a sense, it is already baked into Theorem~\ref{theorem:saviour_thm}, which is why we never explicitly need it. Theorem~\ref{theorem:saviour_thm} works in the sparse CNF regime only (i.e., $\calF_2(n,\delta n)$ with $\delta > 0$), which, in itself, forces small degree with high probability. 

\section{Discussion and Future Work}

Several natural variations of Theorem~\ref{theorem:compilability_thresholds} need to be investigated. First, we can modify the \emph{initial language}; for instance, instead of sparse $2$-CNF one may consider $k$-CNF for $k > 2$, or even $k$-XORSAT or $k$-NAESAT formulas, sparse or dense. Second, the \emph{target language} can also be altered. One direction is to constrain it more, for instance, by saying that the variable order of the OBDD is fixed in advance. In that case, we conjecture that the OBDD-size is large even below the treewidth threshold, as there is no way to exploit the structure of the primal graph through the variable order. In the opposite direction, there are many compilation languages that generalize and are more \emph{succinct} than OBDD~\cite{DarwicheM02}, which could be considered as alternative target languages. We plan to show that Theorem~\ref{theorem:compilability_thresholds} also holds for compilation to \emph{structured DNNF} circuits, and we have, in a sense, paved the way for the proof by insisting on working with treewidth rather than pathwidth. We also believe that replacing sparse $2$-CNF with sparse $3$-CNF is feasible. Currently, the exact location of the satisfiability threshold for random $3$-CNF is unknown.  In addition, to our knowledge, the satisfiability threshold is not the same as the treewidth threshold (which we are confident exists), but the result can perhaps be stated and proved using placeholder symbols instead of the actual values.

We also want to research alternative proofs of  Theorem~\ref{theorem:compilability_thresholds}. There could be a clever proof relying on the fact that, as $n$ increases, almost all 2-CNF formulas represent \emph{unate functions}, i.e., functions that are monotone modulo consistent literal renaming~\cite{Allen2007}. Note that this is \underline{not} the same as saying that almost all 2-CNF are monotone modulo consistent literal renaming. Nevertheless, knowing Theorem~\ref{theorem:compilability_threshold_monotone} for monotone CNFs, there is some hope for a proof of Theorem~\ref{theorem:compilability_thresholds} that would try to reduce the CNF to one that is monotone. But this requires showing that almost all 2-CNF formulas from $\calF_2(n,\delta n)$, \emph{for any fixed $\delta$}, represent unate functions as $n$ increases. This does not derive directly from~\cite{Allen2007}, but sounds plausible.  

Lastly, as a follow-up to~\cite{GuptaRM20}, empirical evaluations would be insightful. Setting up these experiments is not exactly easy since finding the variable order that gives the smallest OBDD is \NP-hard~\cite{BolligW96}, but that would allow us to see whether phase transitions are actually visible for reasonable $n$ at the thresholds given in this paper.

\bibliography{main}

\section{Appendix -- Proofs}

\twThresholdForCNF*
\begin{proof}
When $F \sim \calF_2(n,\delta n)$ we are likely to have less than $\delta n$ edges in $G_F$ because we can select up to four clauses that correspond to the same edge $\{x,y\}$ namely, $(x \lor y)$, $(\neg x \lor y)$, $(x \lor \neg y)$ and $(\neg x \lor \neg y)$. Let $E_n$ be all possible edges over $X_n = \{x_1,\dots,x_n\}$. Consider the graph $G'_F$ obtained by randomly adding $\delta n - |E(G_F)|$ distinct edges of $E_n \setminus E(G_F)$ to $G_F$. We have $G'_F \sim \calGr(n,\delta n)$. 

Since $\tw(G_F) \leq \tw(G'_F)$ we immediately get by Theorem~\ref{theorem:tw_threshold_for_graphs_uniform},  that, when $\delta < 1/2$, $\lim_{n \rightarrow \infty} \Pr[\tw(G_F) \geq 3] \leq \lim_{n \rightarrow \infty} \Pr[\tw(G'_F) \geq 3] = 0$.

Now, when $\delta > 1/2$, Theorem~\ref{theorem:tw_threshold_for_graphs_uniform}, gives us $\lim_{n \rightarrow \infty} \Pr[\tw(G'_F) \geq cn] = 0$. We say that the edge $\{x,y\}$ has weight $k \in \{0,1,2,3,4\}$ in $G_F$ if $k$ of the clauses listed above are in $F$. We claim that, with high probability, $G_F$ does not contain more than $\log(n)$ edges with weight $2$ or more. It will follow that $|E(G_F)| \geq \delta n - 4\log(n)$ with high probability. Furthermore, since adding an edge to a graph cannot increase the treewidth by more than $2$, we will have $\tw(G'_F) \geq \tw(G_F) + 8\log(n)$ with high probability. But then $
\lim_{n \rightarrow \infty} \Pr[\tw(G_F) \geq cn + 8\log(n)] = 0$ will follow, and therefore $
\lim_{n \rightarrow \infty} \Pr[\tw(G_F) \geq 2cn] = 0$.

It remains to be proved that the number of edges with weight at least $2$ is small. Let $N = 4\binom{n}{2}$, the probability for edge $e := \{x,y\}$ to have weight $0$ is $\binom{N-4}{m}/\binom{N}{m}$, the probability for $e$ to have weight $1$ is $4\binom{N-4}{m-1}/\binom{N}{m}$. Thus the probability for $e$ to have weight $\geq 2$ is 
\begin{multline*}
1 - \frac{(N-4)!(N-m)!}{(N-4-m)!N!} - \frac{4m(N-4)!(N-m)!}{(N-3-m)!N!}
\\
\hfill = 1 -  \left(1 - \frac{m}{N}\right)\left(1 - \frac{m}{N-1}\right)\left(1 - \frac{m}{N-2}\right)\left(1 - \frac{m}{N-3}\right) 
\\
\hfill - \frac{4m}{N-3} \left(1-\frac{m}{N}\right)\left(1 - \frac{m}{N-1}\right)\left(1 - \frac{m}{N-2}\right)
\\
\hfill  \leq 1 - \left(1 - \frac{m}{N-3}\right)^4  - \frac{4m}{N-3} \left(1 - \frac{m}{N-3}\right)^3 
\\
\hfill  = 1 - \left(1 - \frac{4m}{N-3} + O\left(\frac{m^2}{N^2}\right) \right) - \frac{4m}{N-3} \left(1 + O\left(\frac{m}{N}\right)\right)^3 
\\
\hfill = O\left(\frac{m^2}{N^2}\right) 
\end{multline*}
So the expected number of edges with weight $\geq 2$ is at most $N/4\cdot O(m^2/N^2) = O(m^2/N) = o(1)$. Thus, by Markov's bound, the probability that more than $\log(n)$ edges have weight $\geq 2$ goes to $0$ as $n$ increases.
\end{proof}

\smallDegree*
\begin{proof}
We consider $G \sim \calGr(n,p = 2\delta/n)$, show that $\lim_{n \rightarrow \infty}\Pr[\Delta(G) \geq \log(n)] = 0$ and use~\cite[Corollary 1.16]{JansonLR00} to conclude that $\lim_{n \rightarrow \infty}\Pr[\Delta(G) \geq \log(n)] = 0$ when $G \sim \calGr(n,m=\delta n)$. For a vertex $v \in V_n = \{v_1,\dots,v_n\}$, let $E_v = \{ \{v,v'\} \mid v' \in V_n \setminus \{v\}\}$ be all possible edges incident to $v$ and let $\deg(v)$ be the degree of $v$ in $G$. Let $\calE = \{E \subseteq E_v \mid |E| = \log(n)\}$ be the collection of all sets of $\log(n)$ edges incident to $v$.
\[
\Pr[\deg(v) \geq \log(n)] \leq \Pr\bigg[ \bigcup_{E \in \calE} E \subseteq E(G)\bigg] \leq \sum_{E \in \calE} \Pr\bigg[ \bigcap_{e \in E} e \in E(G) \bigg] = |\calE| \left(\frac{\delta  n}{\binom{n}{2}}\right)^{\log(n)}
\]
We recall that $(\frac{a}{b})^b \leq \binom{a}{b} \leq (\frac{ea}{b})^b$ holds for every $b \leq a$. We have $|\calE| = \binom{n-1}{\log(n)} \leq \left(\frac{en}{\log(n)}\right)^{\log(n)}$ so $\Pr[\deg(v) \geq \log(n)] \leq \left(\frac{4e\delta}{\log(n)}\right)^{\log(n)}$. Hence
\[
\Pr[\Delta(G) \geq \log(n)] \leq \sum_{v \in V_n} \Pr[\deg(v) \geq \log(n)] \leq n  \left(\frac{4e\delta}{\log(n)}\right)^{\log(n)} \xrightarrow{n \rightarrow \infty} 0 \qedhere
\]
\end{proof}

\fromHtoF*
\begin{proof}
By application of Bayes rule.
\begin{align*}
\Pr\limits_{F \sim \calF_2(n,m)}&[F \in \mathscr{P}] = \Pr\limits_{F \sim \calH_2(n,m)}[F \in \mathscr{P} \mid F \text{ is simple}] 
\\&= \frac{\Pr_{F \sim \calH_2(n,m)}[F \in \mathscr{P} \land F \text{ is simple}]}{\Pr_{F \sim \calH_2(n,m)}[F \text{ is simple}]}
\leq \frac{\Pr_{F \sim \calH_2(n,m)}[F \in \mathscr{P}]}{\Pr_{F \sim \calH_2(n,m)}[F \text{ is simple}]}
\end{align*}
Since the denominator is at least $(1-o(1))e^{-\delta^2 - \delta}$ (by (\ref{eq:proba-simple})) the fraction converges to $0$ when its numerator converges to $0$ as $n$ increases. 
\end{proof}

\notTooManyNonUniqueClauses*
\begin{proof}
By Markov bound, using (\ref{eq:birthday}), $\Pr[F \text{ has} \geq \sqrt{n} \text{ non-unique clauses}]$ is at most
$$
\delta \sqrt{n} \left(1 - \left(1-\frac{1}{n^2}\right)^{\delta n}\right) \leq  \delta \sqrt{n} \left(1 - \left(1  - \frac{\delta}{n} + O\left(\frac{1}{n^2}\right)\right)\right) \leq  \delta \sqrt{n} \left(\frac{\delta}{n} + O\left(\frac{1}{n^2}\right)\right),
$$ 
where we have used that the Taylor expansion of $(1-x^2)^{\delta/x}$ at $0$ is $1 - \delta x +O(x^2)$ (the function and its derivative are not defined at $0$ but are analytically extendable to $0$). The right-hand side function converges to $0$ has $n$ goes to infinity.
\end{proof}

\largeTWinH*
\begin{proof}
Let $E$ be all possible edges over $X_n$. We call $G$ the graph obtained from $G_F$ by adding $\delta n - |E(G_F)|$ distinct edges chosen uniformly at random from $E \setminus E(G_F)$. Observe that $G \sim \calGr(n,\delta n)$ so, by Theorem~\ref{theorem:compilability_threshold_monotone}, $\lim_{n \rightarrow \infty} \Pr[\tw(G) \geq 
d n]  = 1$ for some $d > 0$.  

How many edges are added to get from $G_F$ to $G$? By Lemma~\ref{lemma:not-too-many-non-unique-clauses-in-H} there are at least $\delta n - \sqrt{n}$ unique clauses in $F$ and, reasoning as in the proof of Lemma~\ref{lemma:tw_threshold_for_2CNF}, with high probability less than $\log(n)$ edges of $G_F$ have weight $2$ or more. So $|E(G_F)| \geq \delta n -\Omega(\sqrt{n})$ with high probability. Thus with high probability we add $O(\sqrt{n})$ and therefore $\tw(G_F) \geq \tw(G) -\Omega(\sqrt{n})$. It follows that $\lim_{n \rightarrow \infty} \Pr[\tw(G_F) \geq dn -\Omega(\sqrt{n}) \geq
d n/2]  = 1$.
\end{proof}

\largeMimwInH*
\begin{proof}
Follows from combining Lemma~\ref{lemma:large-tw-in-H} and Theorem~\ref{theorem:tw_mmw}.
\end{proof}

\largeMatchingFormula*
\begin{proof}
Let $L(T)$ be the set of leaves of $T$. If $t \in L(T)$, then $|V_t| = 1$ and $mm_G(V_t,\bar V_t) \in \{0,1\}$. Thus, $mmw(G,T) = \max_{t \in V(T)} mm_G(V_t,\bar V_t) \leq \max_{t \in V(T) \setminus L(T)} mm_G(V_t,\bar V_t) + 1$. So 
$$
|M(F)| = \max_{t \in V(T(F)) \setminus L(T(F))} mm_{G_F}(V_t,\bar V_t)\geq \min_T\max_{t \in V(T) \setminus L(T)} mm_{G_F}(V_t,\bar V_t) \geq  mmw(G_F) - 1
$$ with $T$ ranging over all binary tree decompositions of $G_F$. By Lemma~\ref{lemma:large-mimw-in-H}, it follows that $\lim_{n \rightarrow \infty} \Pr[|M(F)| \geq \gamma n] = 1$. 
\end{proof}

\firstKclausesAreMatching*
\begin{proof}
    Let $F^{\leq k} = C_1 \land \dots \land C_k$. Fix $i \in [k]$. Let $X$ be the number of pairs $(C_i,C_j)$, $i \neq j$, such that $\var(C_i) \cap \var(C_j) \neq \emptyset$. Once $C_i$ is chosen, there are $8(n - 1) - 4$ possible clauses that share a variable with it. So
    \[\Pr[C_i \text{ and } C_j \text{ share a variable}] = \frac{8(n-1)-4}{4\binom{n}{2}} \leq \frac{4}{n}\]
    Thus, $\Ex[X] \leq \binom{k}{2} \frac{4}{n} \leq O(\frac{k^2}{n})$. Since $k=o(\sqrt{n})$, we have $\lim_{n\rightarrow \infty}\Ex[X] = 0$. So, by Markov bound, $\lim_{n\rightarrow \infty}\Pr[X \geq 1] \leq \lim_{n\rightarrow \infty}\Ex[X] = 0$.
\end{proof}

\matchingOnFirstKclauses*
\begin{proof}
Let $clauses(F)$ be the multiset of clauses of $F$ (with repetition). Let $\calC$ be the set of multisets of $\delta n$ clauses such that, if $clauses(F) \in \calC$, then $|M(F)| \geq \gamma n$, with $\gamma$ the constant of Corollary~\ref{corollary:large-matching}. By Corollary~\ref{corollary:large-matching}, for any $t$ 
\begin{equation}\label{eq:sum_over_good_clause_multisets}
\Pr[|M(F,k)| \geq t] = \sum_{S \in \calC} \Pr[|M(F,k)| \geq t \mid clauses(F) = S] \Pr[clauses(F) = S] + o(1)
\end{equation}
Formulas from $\calH_2(n,\delta n)$ that have the same clause multiset $S = \{C_1,C_2,\dots,C_{\delta n}\}$ also have the same matching $M(F)$. Let $C_{i_1},\dots,C_{i_s}$ be the clauses  of $M(F)$ for a fixed $S$. We ask the probability that a formula with clause multiset $S$ has $t$ clauses from $M(F)$ in its first $k$ clauses. This probability is upper bounded by the case where $C_i \neq C_j$ for every $i \neq j$, since having a duplicate of $C_{i_1}$ would increase the probability that one is in the first $k$ clauses. So we assume $F$ is simple and bound 
$$
\Pr[|M(F,k)| < t \mid clauses(F) = S]
$$
from above. Let $\sigma$ be a random permutation of $[\delta n]$ and $\sigma(F) := C_{\sigma(1)} \land \dots \land C_{\sigma(\delta n)}$. Each $\sigma(F)$ is equally likely in $\calH_2(n,\delta n)$. Our probability is then that $\sigma(F)^{\leq k}$ intesects $M(F)$ on $t$ clauses, i.e., that $\{\sigma(1),\dots,\sigma(k)\} \cap \{i_1,\dots,i_s\}$ has size at least $t$. This is exactly the probability that a random variable $X$ following the hypergeometric distribution of parameters $(\delta n, s,k)$ is at least $t$ (at least $t$ red balls when picking $k$ balls from a bin of $\delta n$ balls, of which $s$ are red). Now suppose $s$ is at least $\gamma n$, then the expected value of $X$ is at least $\gamma k/\delta$. Thus, if $t = \gamma k/(2\delta)$, then the tail bound for the hypergeometric distribution gives us 
$$
\Pr[X < \tfrac{\gamma k}{2\delta}] \leq \Pr[X - \Ex[X] < -\tfrac{\gamma k}{2\delta}] \leq \exp(-\tfrac{\gamma^2 k}{2\delta^2}) = \exp(-\tfrac{\gamma^2 n^{1/3}}{2\delta^2})
$$
So, when $s \geq \gamma n$, $\Pr[|M(F,k)| < \tfrac{\gamma k}{2\delta} \mid clauses(F) = S] \leq \Pr[X < \tfrac{\gamma k}{2\delta}] = o(1)$ when $n \rightarrow \infty$. Plugging this into (\ref{eq:sum_over_good_clause_multisets}) yields
$$
\Pr\left[|M(F,k)| <  \tfrac{\gamma k}{2\delta}\right] = o(1)  \quad \text{ when } \quad n \rightarrow \infty.
$$
\end{proof}

\end{document}